\begin{document}

\title{Subsumption Algorithms for Three-Valued Geometric Resolution}
\author{Hans de Nivelle}
\address{School of Science and Technology, Nazarbayev University,
         53 Qabanbay Batyr, Astana 010000, Kazakhstan}

\begin{abstract}
   In our implementation of geometric resolution, the
   most costly operation is subsumption testing (or matching):
   One has to decide for a three-valued, geometric formula,
   if this formula is false in a given interpretation.
   The formula contains only atoms with variables, equality, and existential
   quantifiers. The interpretation contains only atoms with constants.
   Because the atoms have no term structure, 
   matching for geometric resolution is hard.  
   We translate the matching problem into a generalized constraint 
   satisfaction problem, and discuss several approaches for solving 
   it efficiently,
   one direct algorithm and two translations to propositional SAT. 
   After that, we study filtering techniques based on local 
   consistency checking.
   Such filtering techniques can a priori refute a large percentage
   of generalized constraint satisfaction problems.
   Finally, we adapt the matching algorithms in
   such a way that they find solutions that use a minimal subset of
   the interpretation. The adaptation can be combined with
   every matching algorithm. 
   The techniques presented in this paper may have applications
   in constraint solving independent of geometric resolution.
\end{abstract}

\maketitle
\section{Introduction}
\label{Sect_introduction}

Main topic of this paper is \emph{the generalized matching problem},
for example how to match $ p(X,Y), \ q(Y,Z) $ into
$ p(0,1), \ p(0,2), \ q(1,3), \ q(2,4), \ r(0,3) $ without 
matching $ r(X,Z). $ This problem arose in the implemention
of geometric resolution. 
Geometric logic as a theorem proving strategy was introduced
in \cite{BezemCoquand2005}. (The authors use the name 
\emph{coherent logic}.)
Bezem and Coquand were motivated mostly by the desire to
obtain a theorem proving strategy with a 
simple normal form transformation, which makes that many natural
problems need no transformation at all, others have a much simpler
transformation, and which makes that in all cases
Skolemization can be avoided. This results in more readable proofs,
and proofs that can be backtranslated more easily. 

Our motivation for using geometric resolution is different,
more engineering-oriented: We hope that three-valued, geometric 
resolution can be made sufficiently efficient, so that it can
be used as a generic reasoning core, into which different kinds 
of two- or three-valued decision
problems (e.g. problems representing type correctness, two-valued decision 
problems, or simply typed classical problems) 
can be translated. Because we want the geometric reasoning core 
to be generic, we are willing to accept transformations that do not 
preserve much
of the structure of the original formula. Subformulas are freely renamed,
and functional expressions are flattened and replaced by relations. 
For details of the calculus, its motivation, and related work,
we refer to \cite{deNivelle2014b}. 

In the current paper we give only a short introduction, 
which is aimed at explaining how matching is used in geometric 
resolution, and how matching instances in geometric resolution
are translated into generalized constraint satisfaction problems.  
If one is interested only in the methods for constraint satisfaction, 
one can ignore the technical part of this section and continue
reading at the overview at the end of this section.

We continue this section by giving 
a definition of three-valued, geometric formulas. 
The definition that we give here is slightly too general, but easier 
to understand than the correct definition in \cite{deNivelle2014b}, 
which contains some additional, technical restrictions 
which are not relevant for matching.

\begin{defi}
   \label{Def_geometric_atom}
   A \emph{geometric literal} has one of the following four forms: 
   \begin{enumerate}
   \item
      A simple atom of form $ p_{\lambda}( x_1, \ldots, x_n ), $
      where $ x_1, \ldots, x_n $ are variables (with repetitions allowed) 
      and $ \lambda \in \{ {\bf f}, \ {\bf e}, \ {\bf t} \}. $
      (denoting \emph{false}, \emph{error} and \emph{true}.)
   \item
      An equality atom of form $ x_1 \approx x_2, $ with $ x_1,x_2 $ 
      distinct variables.
   \item
      A domain atom $ \#_{\bf f} \, x, $ with $x$ a variable. 
   \item
      An \emph{existential atom} of form
      $ \exists y \ p_{\lambda}( x_1, \ldots, x_n, y ) $ with
      $ \lambda \in \{ {\bf f}, \ {\bf e}, \ {\bf t} \}, $ and 
      such that $ y $ occurs at least once in the atom, not necessarily
      on the last place. 
   \end{enumerate}

   \noindent
   A \emph{geometric formula} has form 
   $ A_1, \ldots, A_p \ | \ B_1, \ldots, B_q, $ where 
   the $ A_i $ are simple or domain atoms, and the
   $ B_j $ are atoms of arbitrary type.

   We require that geometric formulas are \emph{range restricted},
   which means that every variable that occurs free in a $ B_j $ 
   must occur in an $ A_i $ as well. 

   The intuitive meaning of $ A_1, \ldots, A_p \ | \ B_1, \ldots, B_q $ is
   $ \forall \overline{x} \ A_1 \vee \cdots \vee A_p \vee 
                            B_1 \vee \cdots \vee B_q, $ 
   where $ \overline{x} $ are all the free variables.
   The vertical bar $ (|) $ has no logical meaning. Its only purpose is 
   to separate the two types of atoms. 
\end{defi}

\noindent
A geometric formula that is not range restricted, can always be made range 
restricted by inserting suitable $ \#_{\bf f} $ atoms into the left hand side. 
This is the only purpose of the $ \# $-predicate.
Interpretations contain predicates of form $ \#_{\bf t} \ c, $ 
for every domain element $ c. $ 
Atoms in geometric formulas are variable-only, and are labeled with 
truth-values, as in \cite{MurrayRosenthal1993}. It is shown in 
\cite{deNivelle2014a}
and \cite{deNivelle2014b} that formulas in classical logic with
partial functions (\cite{deNivelle2011a}) can be translated into sets 
of geometric formulas.

\begin{defi}
   \label{Def_interpretation}
   We define an \emph{interpretation} $ I $ as a finite set of atoms
   of forms $ \#_{\bf t} \, c $ with $ c $ a constant, or 
   form $ p_{\lambda}( x_1, \ldots, x_n ), $ where $ x_1, \ldots, x_n $
   are constants (repetitions allowed).
   Interpretations must be \emph{range restricted} as well.
   This means that every constant $ x $ occurring in the
   interpretation must occur in an atom of form $ \#_{\bf t} \, x. $ 
\end{defi}

\noindent
Matching searches for false formulas. These are formulas whose
premises $ A_1, \ldots, A_p $ clash with $ I, $ while none of the
$ B_j $ is true in $ I. $ 
 
\begin{defi}
   \label{Def_conflict_truth} 
   Let $ I $ be an interpretation. Let $ A $ be a geometric literal.  
   Let $ \Theta $ be a substitution that assigns constants to variables,
   and that is defined on the variables in $ A. $ 
   We say that $ A \Theta $ \emph{conflicts}(or \emph{is in conflict with}) 
   $ I $ if {\bf (1)} 
   $ A $ has form $ p_{\lambda}( x_1, \ldots, x_n ), $ and there is an
   atom of form $ p_{\mu}( x_1 \Theta, \ldots, x_n \Theta ) \in I $ 
   with $ \lambda \not = \mu, $ \ 
   {\bf (2)} 
   $ A $ has form $ x_1 \approx x_2 $ and $ x_1 \Theta \not = x_2 \Theta, $ or
   {\bf (3)} 
   $ A $ has form 
   $ \#_{\bf f} \, x $ and $ ( \#_ {\bf t} \, x \Theta ) \in I. $
 
   \noindent 
   We say that $ A \Theta $ \emph{is true} in $ I $ if \\
   {\bf (1)} $ A $ has form $ p_{\lambda}( x_1, \ldots, x_n ) $ and
   $ p_{\lambda}( x_1 \Theta, \ldots, x_n \Theta ) \in I, $  \
   {\bf (2)} $ A $ has form $ x_1 \approx x_2 $ and 
             $ x_1 \Theta = x_2 \Theta, $ \ 
   {\bf (3)} $ A $ has form $ \#_{\bf t} \, x  $ and 
             $ ( \#_{\bf t} \, x \Theta ) \in I, $ 
   or \ {\bf (4)} 
             $ A $ has form $ \exists y \ B_{\lambda}( x_1, \ldots, x_n, y ) $ 
   and there exists a constant $ c, $ s.t.
   $ B_{\lambda}( x_1 \Theta, \ldots, x_n \Theta, c ) \in I. $ 
\end{defi}
In the definitions of truth and conflict, $ \# $ is treated as a usual 
predicate. 

\begin{defi}
   Let $ I $ be an interpretation. Let $ B $ be a geometric atom.
   Let $ \Theta $ be a substitution that instantiates all free variables of 
   $ B, $ 
   and for which $ B \Theta $ is not true in $ I. $ 
   We define the \emph{extension set} $ E( B, \Theta ) $ as follows:
   \begin{itemize}
   \item
      If $ B $ has form $ p_{\lambda}( x_1, \ldots, x_n) $ or
      $ \#_{\bf t} \ x, $ then
      $ E( B, \Theta ) = \{ B \Theta \}. $ 
   \item
      If $ B $ has form $ x_1 \approx x_2, $ then 
      $ E( B, \Theta ) = \emptyset. $ 
   \item
      If $ B $ has form $ \exists y \ p_{\lambda}( x_1, \ldots, x_n, y ), $ then
      \[ E( B, \Theta ) = 
         \{ \ p_{\lambda}(x_1 \Theta, \ldots,x_n \Theta ,c) \ \} \ | \ 
                       c \in I \ \} \cup 
            \{ \ p_{\lambda}(x_1 \Theta,\ldots,x_n \Theta, \hat{c}) \ \} \ \}. \]
   By $ c \in I $ we mean: $ c $ is a constant occurring in 
   an atom of $ I. $ 
   We assume that $ \hat{c} $ is a fresh constant
   for which $ \hat{c} \not \in I. $ 
   \end{itemize}
\end{defi}
\noindent Intuitively, if for a geometric formula 
$ \phi = \ A_1, \ldots, A_p \ | \ B_1, \ldots, B_q $ and a substitution 
$ \Theta, $ the $ A_i \Theta $ are in conflict with $ I, $ while none of the 
$ B_j \Theta $ is true in $ I, $ then $ \phi \Theta $ is false in $ I. $ If 
there exist a $ B_j $ and an atom $ C \in E( B_j, \Theta ) $ that is not
in conflict with $ I, $ then $ \phi \Theta $ can be made true by
adding $ C. $ If no such $ C $ exists, a conflict was found. 
If more than one $ C $ exists, the search algorithm has to
backtrack through all possibilities. 
The search algorithm tries to extend an initial interpretation $ I $ into
an interpretation $ I' \supset I $ that makes all formulas true.
At each stage of the search, it looks 
for a formula and a substitution that make the formula false. 
If no formula and substitution can
be found, the current interpretation is a model. Otherwise, search 
continues either by extending $ I, $ or by backtracking.  
Details of the procedure are 
described in \cite{deNivelleMeng2006a} for the two-valued case, 
and in \cite{deNivelle2014b} for the three-valued case. 
Experiments with the current three-valued version (available from
\cite{CASC2016}), and the previous
two-valued version (\cite{deNivelleMeng2007f}) show that 
the search for false formulas consumes nearly all of the
resources of the prover. 

\begin{defi} 
   \label{Def_matching}
   An instance of \emph{the matching problem} consists of 
   an interpretation $ I $ and a geometric formula 
   $ A_1, \ldots, A_p\ | \ B_1, \ldots, B_q. $
  
   Determine if there exists a substitution $ \Theta $ that brings
   all $ A_i $ in conflict with $ I, $ and makes none of the
   $ B_j $ true in $ \Theta. $ 
   If yes, then return such substitution. 
\end{defi}

\begin{exas}
   \label{Ex_matchings}
   Consider an interpretation $ I $ consisting of atoms
   \[ P_{\bf t}( x_0, x_0 ), \ P_{\bf e}( x_0, x_1 ), \ 
      P_{\bf t}( x_1, x_1 ), \ P_{\bf e}( x_1, x_2 ), \ 
      Q_{\bf t}( x_2, x_0 ). \]

   \noindent
   The formula 
   $ \phi_1 = \ P_{\bf f}(X,Y), \ P_{\bf f}(Y,Z) \ | \ 
                Q_{\bf t}(Z,X) $ can be matched in five ways:
   \[ \begin{array}{l}
         \Theta_1 = \{ \ X := x_0, \ Y := x_0, \ Z := x_0 \ \} \\
         \Theta_2 = \{ \ X := x_0, \ Y := x_0, \ Z := x_1 \ \} \\
         \Theta_3 = \{ \ X := x_0, \ Y := x_1, \ Z := x_1 \ \} \\
         \Theta_4 = \{ \ X := x_1, \ Y := x_1, \ Z := x_1 \ \} \\
         \Theta_5 = \{ \ X := x_1, \ Y := x_1, \ Z := x_2 \ \} \\
      \end{array}  
   \]
   The substitution 
   $ \Theta_6 = \{ \ X := x_0, \ Y := x_1, \ Z := x_2 \ \} $ would make
   the conclusion $ Q_{\bf t}(Z,X) $ true. 
   Next consider the formula 
   $ \phi_2 = \ P_{\bf f}( X,Y ), \ P_{\bf t}(Y,Z) \ | \ 
                X \approx Y. $ \\
   The substitution $ \Theta = \{ \ X := x_0, \ Y := x_1, \ Z := x_2 \ \} $ 
   is the only matching of $ \phi_2 $ into $ I. $ 
   Finally, the formula
   $ \phi_3 = \ P_{\bf t}(X,Y) \ | \ \exists Z \ Q_{\bf t}(Y,Z) $ 
   can be matched with $ \Theta = \{ \ X := x_0, \ Y := x_1 \ \}, $ 
   and in no other way.
\end{exas}

\noindent
The first formula $ \phi_1 $ in example~\ref{Ex_matchings} has five matchings. 
In case there exists more than one matching, it matters for 
the geometric prover which matching is returned.
This is because the prover analyses which ground atoms in the interpretation
$ I $ contributed to the matching, and will consider only those 
in backtracking.
In general, the set of conflicting atoms in $ I $ should be 
as small as possible,
and should depend on as few as possible decisions. (Decisions in the 
sense of propositional reasoning, see \cite{Handbook_sat:CDCL}.) 
The simplest solution for finding
the best matching would be
to enumerate all matchings, and use some preference relation
$ \preceq $ to keep the best one.
Unfortunately, this approach is not practical because the number 
of matchings
can be extremely high. 
We will address this problem in Section~\ref{Sect_optimal}. 

Even if one is interested in the decision problem only,
matching is still intractable because the decision problem is 
already NP-complete.  
(See problem {\bf LO18} in \cite{GareyJohnson79}.)
In this paper, we introduce several algorithms for 
efficiently solving the matching problem. The algorithms
evolved out of predecessors that have been implemented before
in the two-valued version of {\bf Geo} 
(\cite{deNivelleMeng2007f}), and in the three-valued
version of {\bf Geo} that took part in CASC~J8
(see \cite{CASC2016}). The matching algorithm of the three-valued
version is discussed in detail in \cite{deNivelle2016a}.
Unfortunately, after comparison with other methods,
in particular the algorithms in the current paper,
and translation to SAT, the approach of \cite{deNivelle2016a}
turned out not competitive, and we have abandoned it.
The algorithm in this paper, and translation to SAT
are on average
500-1000 times faster than the algorithm of \cite{deNivelle2016a}.

The paper is organized as follows: 
In Section~\ref{Sect_GCSP}, we translate the matching
problem into a structure called \emph{generalized constraint satisfaction
problem} (GCSP). The generalization consists of the fact that it contains
additional constraints, that a solution must not make true. These constraints 
correspond to the conclusions of the geometric formula 
that one is trying to match.

After that, we present in Section~\ref{Sect_unary_algo} a backtracking 
algorithm for solving GCSP, which is based on backtracking combined
with a form of propagation. It relies on a data structure that we call 
\emph{refinement stack}. 
Refinements stacks were introduced
in \cite{deNivelle2016a}. The matching algorithm of \cite{deNivelle2016a}
turned out non-competitive, but its data structure is still useful. 
In Section~\ref{Sect_conflict_learning} we add conflict
learning to our matching algorithm.
In Section~\ref{Sect_IJCAR2016}, we briefly discuss the
algorithm of \cite{deNivelle2016a}. In Section~\ref{Sect_SAT_trans},
we give two translations from GCSP to SAT. The translations are 
straightforward, and efficiently solved by MiniSat (\cite{Minisat2004}). 
In order to make it possible to run our matching algorithm
independent of geometric logic, possibly opening the way
for other applications, we define an input format for
matching problems in Section~\ref{Sect_input_format}.
The format is derived from the DIMACS format for SAT.
We released the sources in \cite{deNivelle2018a}.
Section~\ref{Sect_experiments} contains experimental results.
The main conclusions are that the algorithm of \cite{deNivelle2016a}
is not competitive, and that our own algorithm is comparable
to translation to SAT combined with MiniSat. 
In Section~\ref{Sect_optimal}, we explain how every algorithm
that is able to find some solution, can be transformed into an
algorithm that finds an optimal solution. This transformation
is essential for the application in geometric resolution.
In Section~\ref{Sect_local_consistency_checking},
we present a priori filtering techniques, that are able to
reject a large percentage of matching instances a priori.

\section{Translation into Generalized Constraint Satisfaction Problem}
\label{Sect_GCSP}

We introduce the generalized constraint satisfaction problem,
and show how instances of the matching problem can be translated.
It is `generalized' because there are additional, negative
constraints (called \emph{blockings}), which a solution is not
allowed to satisfy. The blockings originate from translations of
the $ B_1, \ldots, B_q. $ 

\begin{defi}
   \label{Def_substlet}
   A \emph{substlet} $ s $ is a (small) substitution.
   We usually write $ s $ in the form $ \overline{v} / \overline{x}, $
   where $ \overline{v} $ is a sequence of variables without repetitions,
   and $ \overline{x} $ is a sequence of constants of same length as
   $ \overline{v}. $ 

   We say that two substlets $ \overline{v}_1/\overline{x}_1 $ and
           $ \overline{v}_2/\overline{x}_2 $ are \emph{in conflict}
   if there exist $ i,j $ s.t. $ v_{1,i} = v_{2,j} $ and 
   $ x_{1,i} \not = x_{2,j}. $

   If $ \overline{v}_1/\overline{x}_1, \ldots, \overline{v}_n/\overline{x}_n $
   is a sequence of substlets not containing a conflicting pair, then
   one can merge them into a substitution as follows: 
   $ \bigcup \{ \overline{v}_1/\overline{x}_1, \ldots, 
                \overline{v}_n/\overline{x}_n \} = 
   \{ 
      v_{i,j} := x_{i,j} \ | \ 1 \leq i \leq n, \ 
                               1 \leq j \leq \| \overline{v}_i \| \}. $ 

   If $ \Theta $ is a substitution and $ s = \overline{v} / \overline{x} $ 
   is a substlet, we say that $ \Theta $ \mbox{makes} $ s $ \emph{true} if 
   every $ v_i := x_i $ is present in $ \Theta. $ 

   We say that $ \Theta $ and $ s $ \mbox{are in conflict} if 
   there is a $ v_i/x_i $ with $ 1 \leq i \leq \| v \|, $ s.t. 
   $ v_i \Theta $ is defined and distinct from $ x_i. $ 

   A \emph{clause} $ c $ is a finite set of substlets
   with the same domain. 
   We say that a substitution $ \Theta $ \emph{makes} $ c $ \emph{true} 
   (notation $ \Theta \models c $)
   if $ \Theta $ makes a substlet 
   $ ( \overline{v} / \overline{x} ) \in c $ true. 
   We say that $ \Theta $ \emph{makes} $ c $ \emph{false}
   (notation $ \Theta \models \neg c $) if 
   every substlet $ ( \overline{v} / \overline{x} ) \in c $ is in conflict
   with $ \Theta. $ 
   In the remaining case, we call $ c $ \emph{undecided by} $ \Theta. $ 
\end{defi}

\begin{defi} 
   \label{Def_gcsp} 
   A \emph{generalized constraint satisfaction problem} (GCSP) 
   is a pair of form $ ( \Sigma^{+}, \Sigma^{-} ) $ in which
   $ \Sigma^{+} $ is a finite set of clauses,
   and $ \Sigma^{-} $ is a finite set of substlets.

   A substitution $ \Theta $ is a \emph{solution} of 
   $ ( \Sigma^{+}, \Sigma^{-} ), $ if 
   every clause in $ \Sigma^{+} $ is true in $ \Theta, $ and 
   there is no $ \sigma \in \Sigma^{-}, $ s.t. $ \Theta $ makes
   $ \sigma $ true.
\end{defi}

\begin{defi}
   Let $ ( \Sigma^{+}, \Sigma^{-} ) $ a GCSP.
   We call $ ( \Sigma^{+}, \Sigma^{-} ) $ 
   \emph{range restricted} if for every variable $ v $ that occurs in
   a substlet $ \sigma \in \Sigma^{-}, $ there exists a clause
   $ c \in \Sigma^{+} $ s.t. every substlet $ s \in c $ has
   $ v $ in its domain.
\end{defi}

\noindent
We now explain how a matching instance is translated into 
a generalized constraint satisfaction problem. 

\begin{defi}
   \label{Def_trans_gcsp}
   Assume that $ I $ and $ \phi = \ A_1, \ldots, A_p \ | \ B_1, \ldots, B_q  $
   together form an instance of the matching problem.
   The \emph{translation} $ ( \Sigma^{+}, \Sigma^{-} ) $ 
   of $ ( I, \phi ) $ 
   \emph{into} GCSP is obtained as follows:
   \begin{itemize}
   \item
      For every $ A_i, $ let $ \overline{v}_i $ denote the 
      variables of $ A_i. $ 
      Then $ \Sigma^{+} $ contains the clause
      \[ \{ \ \overline{v}_i / \overline{v}_i \Theta \ | \ 
            A_i \Theta \mbox{ is in conflict with } I \ \}. \]
   \item
      For every $ B_j, $ let $ \overline{w}_j $ denote the variables
      of $ B_j. $ 
      For every $ \Theta $ that makes $ B_j \Theta $ 
      true in $ I, $ \ \ $ \Sigma^{-} $ contains the substlet
      $ \overline{w}_j / ( \overline{w}_j \Theta ). $ 
   \end{itemize}
\end{defi}

\begin{thm}
   A matching instance $ ( I, \phi ) $ has a matching iff its corresponding 
   GCSP has a solution.
\end{thm}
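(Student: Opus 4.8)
The plan is to prove both directions by directly unpacking the definitions, reusing the \emph{same} substitution $\Theta$ on both sides. The key structural fact I would isolate first is a locality observation: by Definition~\ref{Def_conflict_truth}, whether $A_i\Theta$ is in conflict with $I$ depends only on the restriction of $\Theta$ to the variables $\overline{v}_i$ of $A_i$, and likewise whether $B_j\Theta$ is true in $I$ depends only on the restriction of $\Theta$ to the free variables $\overline{w}_j$ of $B_j$. For existential atoms this requires noting that the bound witness $y$ is not among the free variables $\overline{w}_j$, so that truth is genuinely a function of $\overline{w}_j\Theta$ alone. These two locality facts are exactly what make the substlet encoding of Definition~\ref{Def_trans_gcsp} faithful.

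For the forward direction, suppose $\Theta$ is a matching of $(I,\phi)$; I would show the same $\Theta$ is a solution of $(\Sigma^{+},\Sigma^{-})$. For each premise $A_i$, since $A_i\Theta$ is in conflict with $I$, the substlet $\overline{v}_i/\overline{v}_i\Theta$ belongs by construction to the clause $c_i\in\Sigma^{+}$ associated with $A_i$, and $\Theta$ trivially makes this substlet true, so $\Theta\models c_i$. Thus every clause of $\Sigma^{+}$ is true in $\Theta$. For the negative part, suppose toward a contradiction that $\Theta$ makes some $\sigma=\overline{w}_j/(\overline{w}_j\Theta')\in\Sigma^{-}$ true; then $\overline{w}_j\Theta=\overline{w}_j\Theta'$, and since $\Theta'$ was chosen so that $B_j\Theta'$ is true in $I$, the locality observation gives that $B_j\Theta$ is true in $I$ as well, contradicting that $\Theta$ is a matching. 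Hence $\Theta$ is a solution.

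For the converse, suppose $\Theta$ is a solution of $(\Sigma^{+},\Sigma^{-})$. Because each clause $c_i$ is true in $\Theta$, there is a substlet $(\overline{v}_i/\overline{x}_i)\in c_i$ that $\Theta$ makes true, so $\overline{v}_i\Theta=\overline{x}_i$; by construction of $c_i$ this tuple was recorded precisely because some $\Theta'$ with $\overline{v}_i\Theta'=\overline{x}_i$ brings $A_i$ into conflict, and by locality $A_i\Theta$ is then in conflict with $I$ as well. In particular $\Theta$ is defined on every variable of every $A_i$, hence---by range restriction---on every free variable of every $B_j$, so that truth of each $B_j\Theta$ is well-defined. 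If some $B_j\Theta$ were true in $I$, then the substlet $\overline{w}_j/(\overline{w}_j\Theta)$ would lie in $\Sigma^{-}$ and be made true by $\Theta$, contradicting that $\Theta$ is a solution. Therefore no $B_j\Theta$ is true, and $\Theta$ is a matching.

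I expect the only genuinely delicate points to be bookkeeping rather than mathematics: first, handling the existential atoms so that the locality claim is correct (the witness variable must be excluded from $\overline{w}_j$); and second, invoking range restriction in the converse direction to guarantee that the solution $\Theta$ is total on all variables occurring in the $B_j$, so that ``truth of $B_j\Theta$'' is meaningful at all. I would also record the degenerate case where some premise $A_i$ admits no conflicting instantiation: then its clause $c_i$ is empty, no substitution can satisfy it, and correspondingly no matching exists---so both sides of the equivalence fail together, consistent with the convention that an empty clause is false under every $\Theta$.
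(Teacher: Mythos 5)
Your proof is correct, and it takes the only natural route here: the paper states this theorem without any proof, treating it as immediate from Definition~\ref{Def_trans_gcsp}, and your argument is precisely the definitional unpacking the paper leaves implicit. You also correctly isolate the two points that actually need care---locality of truth/conflict (in particular that truth of an existential atom depends only on the instantiation of its free variables), and the use of range restriction to ensure a GCSP solution is total on the variables of the $B_j$---so nothing is missing.
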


\noindent
In theory, the set of blockings $ \Sigma^{-} $ can be removed,
because a blocking $ \sigma $ can always be replaced
by a clause as follows: Let $ \sigma $ be a blocking, 
let $ \overline{v} $ be 
its variables. Define $ \sigma_1 = \sigma, $ and let
$ \sigma_2, \ldots, \sigma_n \in \Sigma^{-} $ be the blockings
whose domain is also $ \overline{v}. $ 
One can replace $ \sigma_1, \ldots, \sigma_n $ by 
the clause $ \{ \ \overline{v}/\overline{c} \ | \ 
   \overline{v} / \overline{c} \mbox{ conflicts all }
      \sigma_i \ ( 1 \leq i \leq n ) \ \}. $ 

We prefer to keep $ \Sigma^{-}, $ because in the worst case, 
the resulting clause has size $m^{ \| \overline{v} \| }, $ 
where $ m $ is the size of the domain.
For example, if $ \sigma_1, \ldots, \sigma_n $ result from an
equality $ X \approx Y, $ then $ \sigma_i $ has form
$ ( X,Y ) / ( x_i, x_i ). $ 
The resulting clause $ c = \{ (X,Y)/(x_i, x_j) \ | \ i \not = j \} $ 
has size $ n(n-1) \approx n^2. $ 

Clauses resulting from a matching problem have the following
trivial, but essential property:
\begin{lem}
   \label{Lemma_conflict_in_clause}
   Let $ ( \Sigma^{+}, \Sigma^{-} ) $ be obtained 
   by the translation in Definition~\ref{Def_trans_gcsp}. 
   Let $ s_1,s_2 \in c \in \Sigma^{+}. $ 
   Then either $ s_1 = s_2, $ or $ s_1 $ and $ s_2 $ are 
   in conflict with each other. 
\end{lem}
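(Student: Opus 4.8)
The plan is to trace any clause in $ \Sigma^{+} $ back to the single premise atom that produced it, and to exploit the fact that all of its substlets share one fixed, repetition-free domain. Once this is established, the claimed dichotomy reduces to an exhaustive case split, in which ``the value sequences differ somewhere'' coincides exactly with the substlet-conflict condition.

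First I would identify the origin of the clause. By Definition~\ref{Def_trans_gcsp}, each clause $ c \in \Sigma^{+} $ comes from exactly one premise atom $ A_i, $ and has the form
\[ c = \{ \ \overline{v}_i / \overline{v}_i \Theta \ | \
         A_i \Theta \mbox{ is in conflict with } I \ \}, \]
where $ \overline{v}_i = ( v_1, \ldots, v_k ) $ is the sequence of variables of $ A_i, $ listed without repetition. In particular, every substlet of $ c $ has the same domain $ \overline{v}_i, $ and the $ v_j $ are pairwise distinct.

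Next I would compare two elements. Take $ s_1, s_2 \in c, $ say $ s_1 = \overline{v}_i / \overline{v}_i \Theta_1 $ and $ s_2 = \overline{v}_i / \overline{v}_i \Theta_2, $ and write $ x_{1,j} = v_j \Theta_1 $ and $ x_{2,j} = v_j \Theta_2. $ Since both substlets carry the identical variable sequence $ \overline{v}_i, $ whose entries are distinct, the requirement $ v_{1,i} = v_{2,j} $ in the definition of substlet conflict is met precisely on the diagonal $ i = j. $ Consequently $ s_1 $ and $ s_2 $ are in conflict if and only if $ x_{1,j} \not= x_{2,j} $ for some $ j. $

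Finally I would perform the case distinction. If $ x_{1,j} = x_{2,j} $ for every $ j, $ then the two substlets agree on their common domain, so $ s_1 = s_2. $ Otherwise some position $ j $ has $ x_{1,j} \not= x_{2,j}, $ and by the previous paragraph this is exactly a conflict between $ s_1 $ and $ s_2. $ The two cases are exhaustive, which proves the lemma. I expect no real obstacle here; the only point that needs care is the diagonal observation, namely that a shared, repetition-free domain is what forces ``the values differ somewhere'' to coincide with the (a priori more permissive) substlet-conflict condition, rather than being strictly weaker than it.
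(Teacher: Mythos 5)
Your proposal is correct and follows essentially the same route as the paper, whose entire proof is the one-line observation that $ s_1 $ and $ s_2 $ share the same domain; your write-up just makes explicit the diagonal argument (repetition-free common domain forces the conflict condition to collapse to ``values differ at some position'') that the paper leaves implicit.
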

Lemma~\ref{Lemma_conflict_in_clause} holds because $ s_1 $ and $ s_2 $ 
have the same domain. 

\begin{exas}
   \label{Ex_translations}
   In example~\ref{Ex_matchings}, the matching
   problem $ ( I, \phi_1 ) $ can be translated into 
   the GCSP below. The clauses are above the horizontal
   line, and the blockings are below it. Because substlets
   in the same clause always have the same variables, we write
   the variables of a clause only once. 
   \[ \begin{array}{l}
         (X,Y) \ / \ ( x_0,x_0 ) \ | \ ( x_0,x_1 ) \ | \ 
                     ( x_1,x_1 ) \ | \ ( x_1,x_2 ) \\
         (Y,Z) \ / \ ( x_0,x_0 ) \ | \ ( x_0,x_1 ) \ | \ 
                     ( x_1,x_1 ) \ | \ ( x_1,x_2 ) \\
         \hline
         (X,Z) \ / \ ( x_0,x_2 ) \\
      \end{array}
   \]
   Translating $ (I,\phi_2) $ results in: 
   \[ \begin{array}{l}
         (X,Y) \ / \ ( x_0,x_0 ) \ | \ ( x_0,x_1 ) \ | \
                     ( x_1,x_1 ) \ | \ ( x_1,x_2 ) \\
         (Y,Z) \ / \ ( x_0,x_1 ) \ | \ ( x_1,x_2 ) \\
         \hline
         (X,Y) \ / \ ( x_0, x_0 ) \\
         (X,Y) \ / \ ( x_1, x_1 ) \\
         (X,Y) \ / \ ( x_2, x_2 ) \\
      \end{array}
   \]
   Translation of $ (I,\phi_3) $ results in: 
   \[ \begin{array}{l}
         (X,Y) \ / \ ( x_0,x_1 ) \ | \ ( x_1,x_2 ) \\
         \hline
         (Y) \ / \ ( x_2 ) \\
      \end{array}
   \]
\end{exas}

\noindent
Before one runs any algorithms on a GCSP, it is useful to do
some simplifications. If the GCSP contains a propositional clause
(a clause whose domain contains no variables), this clause 
either has form $ ( \ ) \ / \ $ (no assignments), or
$ ( \ ) \ / \ ( \ ) $ (one assignment). 
In the first case, the problem is trivially
unsolvable. In the second case, the clause can be removed. 

Similarly, if $ \Sigma^{-} $ contains a propositional blocking, 
then $ ( \Sigma^{+}, \Sigma^{-} ) $ is trivially unsolvable.
Such blockings originate from a $ B_j $ that is purely propositional,
or that has form $ \exists y \ P_{\lambda}( y ). $ 

A third important preprocessing step 
is \emph{removal of unit blockings}.
Let $ \sigma \in \Sigma^{-} $ be a blocking whose domain
is included in the domain of some clause $ c \in \Sigma^{+}. $ 
In that case, one can remove every substlet $ \overline{v}/\overline{c} $
from $ c, $ that has 
$ \bigcup \{ \overline{v}/\overline{c} \} \models \sigma. $ 
If this results in $ c $ being empty, then $ ( \Sigma^{+}, \Sigma^{-} ) $
trivially has no solution. 
If no $ \overline{v}/\overline{c} $ in any clause $ c \in \Sigma^{+} $ 
implies $ \sigma, $ 
then $ \sigma $ can be removed from $ \Sigma^{-}, $ 
because of Lemma~\ref{Lemma_conflict_in_clause}.

Applying removal of unit blockings to the translation of $ (I,\phi_2) $ 
above results in 
\[ \begin{array}{l}
       (X,Y) \ / \ ( x_0,x_1 ) \ | \ ( x_1,x_2 ) \\
       (Y,Z) \ / \ ( x_0,x_1 ) \ | \ ( x_1,x_2 ) \\
       \hline
   \end{array}
\]
It is worth noting that removal of propositional blockings can be viewed
as a special case of removal of unit blockings.

A GCSP can be solved by backtracking, similar to SAT solving.
A backtracking algorithm for GCSP can be either variable or clause based.
A variable based algorithm maintains a substitution
$ \Theta, $ which it tries to extend into a solution.
It backtracks by picking a variable $ v $ and
trying to assign it in all possible ways. It backtracks 
when $ \Theta $ makes a clause $ c \in \Sigma^{+} $ false, 
or a blocking $ \sigma \in \Sigma^{-} $ true. 

A clause based algorithm maintains a consistent set $ S $ of
substlets (whose union defines a substitution).
It backtracks by picking an undecided clause $ c \in \Sigma^{+}, $ 
and consecutively inserting all substlets that are consistent with
$ S $ into $ S. $ It backtracks when there is a clause
$ c $ all of whose atoms are in conflict with $ S, $ or
when $ \bigcup S $ makes a blocking true. 

Our experiments suggest that there is no significant difference
in performance, nor in programming effort, between the two variants. 
We will stick with clause based algorithms, because it seems that they
can be more easily combined with local consistency checking.

\section{Matching Using Refinement Stacks}
\label{Sect_unary_algo}

We first present the algorithm without learning,
and add learning in the next section. 
The algorithm that we present here is a simplification of the 
algorithm in \cite{deNivelle2016a}, which unfortunately could
not be made competitive. The previous algorithm
was based on a combination
of local consistency checking and lemma learning from conflicts.
Local consistency checking will be discussed in detail in
Section~\ref{Sect_local_consistency_checking}, because there 
is still a probability that it can be used as priori check. 

Local consistency checking means that one generates all subsets
of clauses up to some size $ S+1 $ and checks which substlets
can occur in solutions. Substlets that do not occur in any solution
of some subset, certainly do not occur in a solution of the complete GCSP.
In most instances, filtering with a small $ S, $ e.g. $ 1 $ or $ 2 $
results in an empty clause. The algorithm of \cite{deNivelle2016a}
was based on a combination of local consistency checking and decision.
It is discussed in more detail in Section~\ref{Sect_IJCAR2016}. 

The algorithm that we discuss in this section evolved from
\cite{deNivelle2016a}. The main differences are: 
Clauses are not checked against each other anymore. 
Instead, clauses
are checked only against the substitution in combination with blockings.
Secondly, learnt lemmas are flat, i.e. finite disjunctions of single
assignments to variables. In \cite{deNivelle2016a}, lemmas were
finite disjunctions of substlets. It turns out that this
simplification improves performance by a factor between 100 and 1000. 

In order to implement matching algorithms and local
consistency checking, one needs to be able to remove 
substlets from clauses, and reintroduce them during backtracking. 
We call the process of removing substlets from a clause \emph{refinement}.
Whenever a clause has been refined, it may trigger other
refinements. In the earlier algorithm, refinement of a clause
could directly trigger more refinements of other clauses.
In the current algorithm, refinement of a clause can only 
trigger possible extension of the substitution, but extension
of the substitution may still
trigger other clause refinements.
As a consequence, one needs to maintain a queue 
of recent refinements and use this queue to check which more clauses
can be refined. 
We introduce a data structure, called \emph{refinement stack} which
supports refinement of clauses, restoring during backtracking,
and keeping track of unchecked refinements.

\begin{defi}
   \label{Def_refinement_stack}
   A \emph{refinement} has form $ c \Rightarrow d, $ 
   where both $ c $ and $ d $ are clauses, and $ d $ is a 
   subclause of $ c. $ 

   \noindent
   A \emph{refinement stack} $ \overline{C} $ is a finite sequence
   of refinements $ c_i \Rightarrow d_i. $ 
   If there exists a $ j $ with $ i < j $ and $ c_i = c_j, $ then
   $ d_j $ must be a strict subclause of $ d_i. $ 

   \noindent
   For a clause $ c, $ if $ c_i \Rightarrow d_i $ is the last
   refinement with $ c = c_i $ occurring in $ \overline{C}, $ 
   we call $ d_i $ \emph{the current refinement of} $ c. $ 
   
   \noindent
   We define a predicate $ \alpha_i( \overline{C} ) $ that is true
   if $ c_i \Rightarrow d_i $ is the current refinement of $ c_i $ 
   in $ \overline{C}. $ This means that there is no $ j > i $ with
   $ c_j = c_i. $ 

   A refinement stack supports gradual refinement of clauses.
   If $ \alpha_i( \overline{C} ) $ is true, then clause
   $ d_i $ can be refined into $ d' $ by appending
   $ c_i \Rightarrow d' $ to $ \overline{C}. $

   In the new refinement stack 
   $ \overline{C}' = \overline{C} + ( c_{i} \Rightarrow d' ), $ 
   we have $ c_i = c_{n+1}, $ \ \ 
   $ \alpha_i( \overline{C}' ) $ is false, and 
   $ \alpha_{n+1}( \overline{C}' ) $ is true.  

   The \emph{size} $ \| \overline{C} \| $ of a refinement stack
   $ \overline{C} $ is defined as the total number of refinements 
   that occur in it,
   independent of the values of $ \alpha_i( \overline{C} ). $ 
\end{defi}

\noindent
The refinement stack is initialized with the refinements 
$ c \Rightarrow c, $ for each initial clause $ c. $
Refinement stacks can be efficiently implemented without need to copy
clauses by maintaining a stack of intervals of active substlets
in the initial clauses. 
A substlet can be disabled by swapping it with the last active
substlet in the interval, and decreasing the size of the interval by one. 
When the substlet is made active again, it is sufficient to restore
the interval, because the order of active substlets in a clause
does not matter. 
Refinement stacks support change driven inspection as well 
as backtracking. 

Change driven inspection
of clauses can be implemented by starting at position $ k=1. $ 
As long as $ k \leq \| \overline{C} \|, $ one first 
checks $ \alpha_k( \overline{C} ). $ If it is false, then
$ d_k $ is not the current version of $ c_k, $ and one can increase
$ k. $ 
If $ \alpha_k( \overline{C} ) $ is current, one can check if 
$ d_k $ triggers refinement of other clauses. 
If yes, the results are inserted at the end, so that 
they will be inspected at later time.
When one reaches $ k > \| \overline{C} \|, $ one has
reached a stable state. 

When some change involving a variable $ v $ takes place, one needs
to check which clauses may be affected by the change, so that
they can be refined. These are obviously the clauses that contain
$ v, $ but also the clauses that contain a variable occuring
in a blocking that contains $ v, $ since the algorithm
takes blockings into account, when refining. 
This gives rise to the following definition: 

\begin{defi}
   \label{Def_connected_variable}
   Let $ v,w $ be two variables.  We call $ v $ and $ w $ 
   \emph{connected} if $ v $ and $ w $ occur together in a blocking
   $ \sigma \in \Sigma^{-}. $ 
\end{defi}

\noindent
We define the search algorithm. We assume that 
propositional clauses and unit blockings have been removed from
$ ( \Sigma^{+}, \Sigma^{-} ). $
We assume that the substitution
$ \Theta $ is an ordered sequence (stack)
of assignments $ ( v_1/x_1, \ldots, v_s/x_s ). $ 

\begin{algo}
   \label{Algo_unary}
   We want to find a solution for $ ( \Sigma^{+}, \Sigma^{-} ). $
   Initially, set
   $ \Theta := \emptyset $ and $ \overline{C} := \emptyset. $  
   After that, for each $ k \ ( 1 \leq k \leq \| \Sigma^{+} \| ), $ 
   do the following:
   \begin{description}
   \item[PREPROC]
      Let $ c_k $ be the $ k $-th clause in $ \Sigma^{+}. $ 
      Append $ (c_k \Rightarrow c_k ) $ to $ \overline{C}. $ 
      For every variable $ v $ occurring in $ c_k, $ for which all
      substlets in $ c_k $ agree on the value of $ v, $ 
      let $ x $ be the agreed value. 
      \begin{itemize}
      \item
         If $ v \Theta $ is defined, and $ v \Theta \not = x, $ 
         then return $ \bot. $ 

      \item
         If $ v \Theta $ is undefined and there is a blocking
         $ \sigma $ containing $ v, $ s.t. 
         $ \Theta \cup \{ v/x \} \models \sigma, $ then 
         return $ \bot. $ Otherwise, append $ v/x $ to $ \Theta. $ 
      \end{itemize} 
   \end{description}

   \noindent
   After that, we call the main search algorithm 
   $ {\bf findmatch}( \overline{C}, \Theta, s, \Sigma^{-} ) $  
   with $ s = 1. $ 
   It either returns $ \bot, $ or
   it extends $ \Theta $ into a solution of
   $ ( \overline{C}, \Sigma^{-} ). $
   $ {\bf findmatch}( \overline{C}, \Theta, s, \Sigma^{-} ) $ 
   is defined as follows: 
   \begin{description}
   \item[FORW]
      As long as $ s \leq \| \Theta \|, $ let 
      $ v/x $ be the $ s $-th assignment of $ \Theta. $ 
      \begin{enumerate}
      \item
         For every $ ( c_i \Rightarrow d_i ) \in \overline{C} $
         which has $ \alpha_i( \overline{C} ) $ true, and which 
         either contains $ v $ itself, or a variable $ w $ that
         is connected to $ v, $ 
         let 
         \[ d' = \{ s \in d_i \ | \ s 
                    \mbox{ is not in conflict with } \Theta \}. \] 
         If $ d' = \emptyset, $ then return $ \bot. $
         Otherwise, let 
         \[ d'' = \{ s \in d' \ | \ 
            \mbox{there is no } \sigma \in \Sigma^{-}, \mbox{s.t. }
            \Theta \cup \{ s \} \models \sigma \}. \]
         If $ d'' = \emptyset, $ then return $ \bot. $ 
         Otherwise, if $ d'' \subset d_i, $ then
         \begin{enumerate}
         \item
            append $ ( c_i \Rightarrow d'' ) $ to $ \overline{C}. $
         \item
            For every variable $ v' $ occurring in $ d'', $ that
            is unassigned in $ \Theta, $ for which all substlets in 
            $ d'' $ agree on the assigned value, let
            $ x' $ be the agreed value. Append $ v'/x' $ to $ \Theta. $ 
         \end{enumerate}
      \item
         Set $ s = s + 1. $
      \end{enumerate}

   \item[PICK]
      Find an $ i $ with $ \alpha_i( \overline{C} ) $ true
      and $ \| d_i \| > 1. $
      If no such $ i $ exists, then $ \Theta $ is a solution.

      \noindent
      Otherwise, for every substlet $ \overline{v}_j/\overline{x}_j $ 
      in $ d_i, $ do the following: 
      \begin{enumerate}
      \item
         Append 
         $ c_i \Rightarrow ( \overline{v}_j/\overline{x}_j ) $ 
         to $ \overline{C}, $ and  
         extend $ \Theta $ with the unassigned 
         variables in $ \overline{v}_j / \overline{x}_j. $ 
         
      \item
         Recursively call $ {\bf findmatch}( \ \overline{C},
                             \Theta, s, \ \Sigma^{-} \ ). $ 
         If 
         $ \Theta $ was extended into a solution, then return $ \Theta. $ 
      \item
         Otherwise, restore $ \Theta $ and 
         $ \overline{C} $ to the sizes that they 
         had before (1). 
      \end{enumerate} 
      At this point, each of the recursive calls has returned 
      $ \bot. $ Return $ \bot. $ 
   \end{description} 
   \noindent
\end{algo}

\noindent
At {\bf FORW}, the algorithm attempts deterministic reasoning.
For every new assignment in $ \Theta, $ it is checked if it conflicts
with some substlets in some clause. Two types of conflicts are considered,
either the substlet contains an assignment that directly conflicts
with $ \Theta, $ or it contains an assignment that, together with 
$ \Theta, $ implies a blocking. 
As long as conflicts are found, the corresponding clauses are refined.
Refinement of a clause may result in $ \Theta $ being extended
({\bf FORW}~b), if the remaining substlets agree on an assignment. 
Extension of $ \Theta $ may result in further refinements of clauses.

\noindent
If {\bf FORW} failed to solve the problem, then at {\bf PICK}
a non-unit clause is picked, and non-deterministically
refined into a unit clause. This step requires backtracking.
It is important (for performance) to pick a clause of minimal length.

Main purpose of {\bf PREPROC} is to initialize the refinement 
stack $ \overline{C} $ with $ \Sigma^{+}. $ After that,
$ \Theta $ is initialized by looking for assignments that are
common to all substlets in some clause. If this results
in a conflict (either directly, or with a blocking), the problem
is rejected.

\noindent
Algorithm~\ref{Algo_unary} is similar to DPLL 
in that it tries to postpone backtracking as long
as possible by giving preference to deterministic extension. 
At {\bf FORW}, blockings are taken into account.
It is possible to implement {\bf FORW} without considering blockings.
In that case, it has to be checked, whenever the  
substitution is extended (at {\bf PICK~2} and at {\bf FORW~1b})
that the extended substitution does not imply a blocking. 
The given version performs better in experiments. 

In order to show that Algorithm~\ref{Algo_unary} is correct,
i.e. does not report false solutions, we have to 
show that all necessary checks are made. 

\begin{lem}
   \label{Lem_not_alone}\leavevmode
   \begin{enumerate}
   \item
      At points {\bf FORW} and {\bf PICK} of Algorithm~\ref{Algo_unary}, 
      there is no $ \sigma \in \Sigma^{-}, $ s.t. 
      $ \Theta \models \sigma. $ 
   \item
      At point {\bf PICK}, no refined clause $ d_i $
      contains a substlet that is in conflict with $ \Theta. $ 
   \end{enumerate}
\end{lem}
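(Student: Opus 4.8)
The plan is to prove both parts as invariants of the execution of Algorithm~\ref{Algo_unary}, by induction on the sequence of operations that modify $\Theta$ or $\overline{C}$. Part~(1) is a genuinely global invariant, which I would show holds throughout \textbf{FORW} and at the entry of \textbf{PICK}; part~(2) only needs to hold at the \emph{fixpoint} of \textbf{FORW}, i.e. the moment control reaches \textbf{PICK} with $s > \|\Theta\|$. To make the induction close across the recursive calls made inside \textbf{PICK}, I would strengthen the statement with one auxiliary invariant: (C) at the entry of \textbf{PICK}, for every current refined clause $d_i$ and every substlet $s \in d_i$, the substitution $\Theta \cup \{ s \}$ makes no blocking in $\Sigma^{-}$ true. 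Invariant (C) is exactly what \textbf{FORW} enforces when it passes from $d'$ to $d''$, and it is what guarantees that the extension of $\Theta$ performed at \textbf{PICK}~1 cannot suddenly complete a blocking.

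First I would dispatch part~(1). The base case is immediate: after \textbf{PREPROC} the only way $\Theta$ grows is through a guarded append, and the guard is precisely a blocking check; an assignment $v/x$ that is not appended would complete a blocking, an appended one does not, and blockings not mentioning $v$ are unaffected. For the maintenance steps, $\Theta$ is extended only at \textbf{FORW}~1b and \textbf{PICK}~1, and in both cases the appended assignments are sub-assignments of a single substlet $s$ that survived the $d''$ filter, hence satisfies $\Theta \cup \{ s \} \not\models \sigma$ for all $\sigma$. The key elementary fact is that making a blocking true is monotone in $\Theta$: if $\Theta \cup \{ v/x \}$ made some $\sigma$ true, then so would the larger $\Theta \cup \{ s \} \supseteq \Theta \cup \{ v/x \}$, contradicting the filter. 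Thus no extension of $\Theta$ ever creates a true blocking, and part~(1) holds at every point of \textbf{FORW} and \textbf{PICK}.

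The substance of the proof is establishing part~(2) and the auxiliary (C) at the \textbf{FORW} fixpoint. Here I would argue that \textbf{FORW} behaves as a saturation: since $\|\Theta\|$ is bounded and $s$ increases by one each iteration, the pointer eventually visits every assignment of $\Theta$ (including those appended by 1b), and visiting $v/x$ re-refines every \emph{current} clause that contains $v$ or a variable connected to $v$ in the sense of Definition~\ref{Def_connected_variable}. Note also that refinement only shrinks a clause while leaving its domain fixed, so every earlier current version of a clause is a superclause of the final one. I would then show that at the fixpoint, refining any current clause $d_i$ against $\Theta$ leaves it unchanged. For conflicts (part~2): if some $s \in d_i$ conflicted with an assignment $v/x \in \Theta$, then $v \in \mathrm{dom}(d_i)$, so when $v/x$ was visited the then-current refinement of $c_i$ (a superclause of $d_i$, hence still containing $s$) would have dropped $s$ in the step $d' = \{ s \in d_i \mid s \text{ not in conflict with } \Theta \}$, so $s$ could not survive into the current $d_i$ — a contradiction. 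For blockings (C): suppose $\Theta \cup \{ s \} \models \sigma$ although $s$ alone survived the filter. Since by part~(1) $\Theta \not\models \sigma$, the substlet $s$ must share a variable $w$ with $\sigma$; let $v/x$ be the assignment of $\sigma$ not covered by $s$ that occurs latest in the stack $\Theta$, so that at the instant $v/x$ is visited all other required assignments of $\sigma$ are already present and $\Theta \cup \{ s \} \models \sigma$ already holds. Because $v$ and $w$ occur together in $\sigma$ they are connected, and $w \in \mathrm{dom}(d_i)$; hence visiting $v/x$ re-refined $c_i$, and the $d''$ step dropped $s$ — again a contradiction. This is the step where Definition~\ref{Def_connected_variable} does the essential work: connectedness is exactly the condition forcing a clause to be revisited whenever a new assignment could complete a blocking through one of its substlets.

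Finally, I would note that \textbf{PICK}~1 extends $\Theta$ by the unassigned part of a substlet $s$ of a current clause; by part~(2) this $s$ agrees with $\Theta$, and by (C) the resulting $\Theta \cup \{ s \}$ makes no blocking true, so part~(1) still holds when the recursive \textbf{findmatch} re-enters \textbf{FORW}, and on backtracking (\textbf{PICK}~3) the restoration of $\Theta$ and $\overline{C}$ returns the state to the \textbf{PICK}-entry configuration, where all invariants held by hypothesis. I expect the main obstacle to be the bookkeeping in the saturation argument: carefully tracking which refinement of a clause is \emph{current} at the instant a given assignment is processed, and handling the fact that $\Theta$ keeps growing during a single \textbf{FORW} pass, so that a substlet which is individually harmless may only complete a blocking after several further assignments. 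The monotonicity of blocking-completion and the connectedness condition are precisely what tame this, but making the `latest relevant assignment' argument airtight is the delicate point.
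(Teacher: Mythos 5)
Your proposal takes the same basic route as the paper --- an invariant maintained over the execution of Algorithm~\ref{Algo_unary} --- but the paper's own proof is only a two-sentence sketch: \textbf{PREPROC} establishes the invariant, the extension at \textbf{FORW}~1b was pre-checked against blockings when $ d'' $ was computed, and part~(2) holds because \textbf{FORW} ``refined away all substlets that conflict with $ \Theta. $'' You fill in two things the paper leaves implicit, and both are real content. First, the paper's argument for part~(1) never discusses the extension of $ \Theta $ at \textbf{PICK}~1, which is not guarded by any blocking check at the moment it is performed; your auxiliary invariant (C) is exactly the statement needed there, and it does not follow from the $ d'' $ filter alone, since that filter was applied against an older, smaller $ \Theta. $ Second, your saturation argument --- every assignment of $ \Theta $ is visited by some \textbf{FORW} pass in the surviving call chain, visiting $ v/x $ re-refines every current clause containing $ v $ or a variable connected to $ v $ in the sense of Definition~\ref{Def_connected_variable}, and refinements only shrink within a branch --- is the correct mechanism behind both part~(2) and (C), including the bookkeeping across recursion and the restoration at \textbf{PICK}~3.

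However, one step of your proof of (C) fails in a case you do not exclude. You ``let $ v/x $ be the assignment of $ \sigma $ not covered by $ s $ that occurs latest in the stack $ \Theta $''; if $ \sigma $ is entirely contained in the substlet $ s $ (as a set of assignments), no such $ v/x $ exists, and the re-refinement you rely on may never be triggered: a clause none of whose variables (nor connected variables) is ever assigned is never visited by \textbf{FORW}, so such an $ s $ can survive to \textbf{PICK}, and extending $ \Theta $ with it at \textbf{PICK}~1 makes $ \sigma $ true --- in that situation the lemma is literally false. What rules this out is not the dynamics of the algorithm but the standing assumption, stated just before Algorithm~\ref{Algo_unary}, that unit blockings have been removed (Section~\ref{Sect_GCSP}): $ \sigma \subseteq s $ forces $ \mathrm{dom}(\sigma) \subseteq \mathrm{dom}(c_i), $ so $ \sigma $ would be a unit blocking, and the preprocessing would already have deleted $ s $ from $ c_i $ and then $ \sigma $ from $ \Sigma^{-}. $ Your induction needs one sentence invoking this assumption; without it, the argument does not close, and this is precisely the point in the lemma where the preprocessing does real logical work.
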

Initially, the preprocessor ensures that
there is no $ \sigma \in \Sigma^{-}, $ s.t. 
$ \Theta \models \sigma. $ 
When $ \sigma $ is extended in {\bf FORW~1b}, it has been checked
before that $ \Theta \cup \{ s \} $ does not imply a blocking,
for each of the substlets in $ d''. $ 
At point {\bf PICK}, \ {\bf findmatch} passed through 
{\bf FORW} which refined away all substlets that conflict with
$ \Theta. $ 

In the next section, we will extend Algorithm~\ref{Algo_unary}
with learning. This will prove completeness, because whenever
Algorithm~\ref{Algo_unary} does not find a solution, 
it will construct a lemma that proves that no lemma exists.

\section{Conflict Learning}
\label{Sect_conflict_learning}

\noindent
It is known from propositional SAT solving that 
conflict learning dramatically improves the performance
of SAT solvers (\cite{Handbook_sat:CDCL}).
The matching algorithm in the two-valued version
of {\bf Geo} (\cite{deNivelleMeng2007f}) was already
equipped with a primitive form of conflict learning.
Before releasing {\bf Geo}, we had experimented
with naive matching, the algorithm in \cite{GottlobLeitsch85},
and many ad hoc methods.
Matching with conflict learning is the only approach that
results in acceptable performance. Despite this, matching was still 
a critical operation in the last two-valued version of {\bf Geo}.
In the two-valued version of {\bf Geo}, lemmas had form
$ v_1/x_1, \dots, v_n/x_n \rightarrow \bot, $ i.e. they 
had form $ ( \overline{v} / \overline{x} ) \rightarrow \bot $ 
for a single substlet.

In \cite{deNivelle2016a} we proposed to replace the
lemmas of {\bf Geo}~2007 by arbitrary sets of substlets.
It is quite easy to see, that in general such a lemma 
can be in conflict with more substitutions than a 
lemma of the previous form. 
For example, if we assume that the domain is $ \{ X,Y,Z \} $
and the range $ \{ 0,1,2 \}, $ 
then $ (X,Y,Z)/(0,1,2) \rightarrow \bot $ rejects a single
substitution, while $ (X,Y,Z)/(0,1,2), \ (X,Y,Z) / (2,1,0) $
rejects 25 substitutions. 
Since in case of a conflict, one can always obtain a lemma
of the second form, it seemed that lemmas of the second
form should be preferred over lemmas of the first form. 

The latest version of {\bf Geo} see (\cite{CASC2016})
used the algorithm of \cite{deNivelle2016a} with 
lemmas of the unrestricted form above.
Although this matching algorithm 
performs better than matching in {\bf Geo}~2007, 
recent experiments have shown that it performs
significantly worse than some other approaches,
in particular translation to SAT and Algorithm~\ref{Algo_unary}
in combination with flat lemmas. 
Flat lemmas 
are lemmas of form $ v_1 \in V_1 \vee \cdots \vee v_n \in V_n. $ 
Surprisingly, Algorithm~\ref{Algo_unary} with
unrestricted lemmas performs several orders worse than
Algorithm~\ref{Algo_unary} with flat lemmas. 
This is surprising, because every general lemma can be 
flattened into a lemma of the second form by picking a single 
assignment from each substlet. The resulting lemma is obviously
less general than its original, non-flattened version. 
This loss of generality also applies to the reasoning rules that
we use on lemmas. If two substlets in two general lemmas
are in conflict, then their flattenings are not
necessarily in conflict. Conversely, whenever two flattened
substlets are in conflict, their original counterparts are.
This means that by using
flattened lemmas, one looses conflicts with substitutions,
and also resolution derivations involving lemma resolution. 
Despite this clever reasoning, the first columns of
Figure~\ref{Fig_unary_sat} of Section~\ref{Sect_experiments}
show that Algorithm~\ref{Algo_unary}
with flat lemmas performs approximately 200-400 times
worse than Algorithm~\ref{Algo_unary} with unrestricted lemmas.
One could assume that this is caused by the fact that
handling of unrestricted lemmas is more costly, and that 
their theoretical advantage is compensated by the increased
cost of their maintenance.
This assumption is rejected by
Figure~\ref{Fig_unary_sat}, because Algorithm~\ref{Algo_unary} 
with flattened lemmas is not only faster, but it also uses
less lemmas, typically by a factor 2-3. 
The only point where Algorithm~\ref{Algo_unary} with and without
flattening can diverge, is when a conflict lemma
rejects a substitution $ \Theta, $ and there exists more than one 
conflict lemma. Since both versions will prefer the shortest lemma, 
it must be due to the fact that flattening changes the relative
sizes of the lemmas.  

The outcomes of the experiments make it probable that the 
best approach to matching will be either Algorithm~\ref{Algo_unary} with
flat lemmas, or translation to SAT in combination with a SAT-solver, 
which we will describe in Section~\ref{Sect_SAT_trans}. 

From the practical point of view, the fact that the refining
algorithm in \cite{deNivelle2016a} turned out not competitive, is not a 
serious loss. Despite being elegant on paper, it was hard to implement. 
Implementation of Algorithm~\ref{Algo_unary} was much easier, 
and in the long term, it is better that the easier algorithm 
has the better performance.
Moreover, it is clear from Figure~\ref{Fig_unary_sat} that matching in
future versions of {\bf Geo} can be approximately $ 1000 $ times faster 
than it was at {\bf Geo}~2016c (\cite{CASC2016}). 

We will now introduce the flat lemmas, and prove that
Algorithm~\ref{Algo_unary} can always generate a flat conflict lemma.

\begin{defi}
   \label{Def_lemma}
   A \emph{lemma} is an object of form $ \{ v_1/V_1, \ldots, v_n/V_n \} $
   with $ n \geq 0. $ The $ v_i $ are variables, and the $ V_i $ 
   are finite sets of constants. 

   \noindent
   It is convenient to treat lemmas as total functions from variables
   to sets of constants.
   For a variable $ v $ and $ \lambda = \{ v_1/V_1, \ldots, v_n/V_n \}, $ \
   $ \lambda(v) $ is defined as $ \bigcup \{ V_i \ | \ v_i = v \}. $  

   \noindent
   Let $ \Theta $ be a substitution. We say that $ \Theta $ 
   \emph{makes} $ \lambda $ \emph{true} if 
   there exists a variable $ v $ in the domain of $ \Theta, $ for which
   $ v \Theta \in \lambda(v). $ 

   \noindent
   We say that $ \Theta $ makes $ \lambda $ false if all variables
   $ v $ for which $ \lambda(v) $ is nonempty, are in the domain
   of $ \Theta, $ and $ v \Theta \not \in \lambda(v). $ 
   In that case, we write $ \Theta \models \neg \lambda. $ 
\end{defi}

\begin{defi}
   \label{Def_valid_and_conflict}
   Let $ ( \Sigma^{+}, \Sigma^{-} ) $ be
   a GCSP.
   Let $ \lambda $ be a lemma.
   We say that $ \lambda $ is \emph{valid in}
   $ ( \Sigma^{+}, \Sigma^{-} ) $
   if every solution $ \Theta $ of $ ( \Sigma^{+}, \Sigma^{-} ) $
   makes $ \lambda $ true.

   \noindent
   For a given substitution $ \Theta, $ we call $ \lambda $
   a \emph{conflict lemma} if $ \lambda $ is valid and
   $ \Theta $ makes $ \lambda $ false. 
\end{defi}

\noindent
If $ \Theta $ is a substitution, and there exists a valid lemma
that is false in $ \Theta, $ then it is not possible to extend
$ \Theta $ into a solution of $ ( \Sigma^{+}, \Sigma^{-} ). $ 

In order to derive the conflict lemma, the following rules will
be used:

\begin{defi}
   \label{Def_reso_proj_sigma}
   Given a GCSP $ ( \Sigma^{+}, \Sigma^{-} ), $ we define
   the following derivation rules: 
   \begin{description}
   \item[RESOLUTION]
      Let $ \lambda_1, \ldots, \lambda_m $ be a sequence of lemmas.
      Let $ v $ be a variable. 
      Let $ V $ be the set of variables $ v, $ 
      for which one of the $ \lambda_j $ has $ \lambda(v) \not = \emptyset. $ 
      We define the $ v $-\emph{resolvent of}
      $ \lambda_1, \ldots, \lambda_m $ as
      \[ \{ v/\bigcap_{1 \leq j \leq m} \lambda_j(v) \} \cup
         \{  v'/\bigcup_{ 1 \leq j \leq m} \lambda_j(v') \ | \ 
             v' \in V \wedge v' \not = v \}. \] 
   \item[PROJECTION]
      Let $ c \in \Sigma^{+} $ be a clause, let $ \lambda $
      be a lemma. We call $ \lambda $ a \emph{projection} of $ c, $ 
      if every substlet $ (\overline{v}/\overline{x}) \in c $ 
      contains an assignment $ v/x, $ s.t. $ x \in \lambda(v). $ 
   \item[$\sigma$-RESOLUTION]
      Let $ \sigma \in \Sigma^{-} $ be a blocking.
      Write $ \sigma = \{ v_1/x_1, \ldots, v_n/x_n \} \ ( n > 0 ). $
      Let $ c_1, \ldots, c_n \in \Sigma^{+} $ be clauses, 
      chosen in such a way that 
      every variable $ v_i $ occurs in $ c_i. $ 
      For every $ c_i, $ let 
      \[ V_i = \{ \ x \ | \ c_i \mbox{ contains a substlet }
                          \overline{w}/\overline{y} \mbox{ which contains } 
                          v_i / x \mbox{ and } x \not = x_i \ \}. \]
      Then the lemma 
      \[ \{ v_1/V_1, \ldots, v_n/V_n \} \]
      is called a $ \sigma $-\emph{resolvent of} $ c_1, \ldots, c_n. $ 
   \end{description} 
\end{defi}

\noindent
The lemmas $ \{ \ x / \{ 1, 2, 3 \}, \ y / \{ 2,3 \} \ \} $ and
$ \{ \ x / \{ 3, 4 \}, \ y / \{ 3,4 \}, \ z / \{ 2 \} \ \} $ can resolve into
$ \{ \ x / \{ 3 \}, \ y / \{ 2,3,4 \}, \ z / \{ 2 \} \ \}. $ 
Given clauses
$ c_1 = \{ \ (x,y)/(1,2), \ (x,y)/(1,1), \ (x,y)/(3,3) \ \} $ and 
$ c_2 = \{ \ (y,z)/(1,2), \ (y,z)/(2,1) \ \}, $ and
a blocking 
$ (x,z)/(1,2), $ one can obtain the $ \sigma $-resolvent
$ \{ \ x / \{ 3 \}, \ z / \{ 1 \} \ \}. $ 
The lemma $ \lambda'_1 = \{ x / \{ 1,3 \} \} $ is a projection of
$ c_1. $ $ \lambda''_1 = \{ \ x / \{ 3 \}, \ y / \{ 1,2 \} \ \} $
is also a projection of $ c_1. $ 
It is easy to see that the reasoning rules are valid, 
which implies that every lemma that has been
obtained by repeated application from the original
clauses in $ \Sigma^{+} $ and blockings in $ \Sigma^{-}, $ is valid.

\begin{lem}
   \label{Lem_sigma_resolvent_false} 
   Let $ ( \Sigma^{+}, \Sigma^{-} ) $ be a GCSP. 
   Let $ \Theta $ be an interpretation. Let $ \sigma \in \Sigma^{-} $
   be a blocking for which $ \Theta \models \sigma. $ 
   Let $ \lambda $ be a $ \sigma $-resolvent of $ \sigma. $ 
   Then $ \Theta $ makes $ \lambda $ false. 
\end{lem}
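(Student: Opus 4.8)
The plan is to unfold the definition of $\sigma$-resolvent from Definition~\ref{Def_reso_proj_sigma} and the definition of ``makes false'' from Definition~\ref{Def_lemma}, and to observe that the conclusion is built directly into the construction of the sets $V_i$. Write the blocking as $\sigma = \{ v_1/x_1, \ldots, v_n/x_n \}$ with $n > 0$; since $\sigma$ is a substlet, the variables $v_1, \ldots, v_n$ are pairwise distinct. Let $\lambda = \{ v_1/V_1, \ldots, v_n/V_n \}$ be the $\sigma$-resolvent under consideration, where the $V_i$ are the sets specified in Definition~\ref{Def_reso_proj_sigma}. Because the $v_i$ are distinct, $\lambda(v_i) = V_i$ for each $i$, and $\lambda(v) = \emptyset$ for every other variable $v$.

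First I would record the consequence of $\Theta \models \sigma$. By the definition of when a substitution makes a substlet true (Definition~\ref{Def_substlet}), $\Theta \models \sigma$ means that every assignment $v_i := x_i$ occurs in $\Theta$. Hence each $v_i$ lies in the domain of $\Theta$, and $v_i \Theta = x_i$.

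Next I would check the two requirements of ``$\Theta$ makes $\lambda$ false'' from Definition~\ref{Def_lemma}. The variables $v$ with $\lambda(v) \neq \emptyset$ are exactly those $v_i$ with $V_i \neq \emptyset$, and each such $v_i$ is in the domain of $\Theta$ by the previous paragraph; this discharges the domain condition. For the value condition, fix any $i$ with $V_i \neq \emptyset$. By construction every $x \in V_i$ satisfies $x \neq x_i$, so $x_i \notin V_i$. Since $v_i \Theta = x_i$, this gives $v_i \Theta \notin V_i = \lambda(v_i)$. Both conditions hold, so $\Theta \models \neg \lambda$.

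There is no real obstacle here: the result is immediate once the definitions are aligned, because the exclusion clause $x \neq x_i$ in the definition of $V_i$ is precisely what forces $x_i \notin \lambda(v_i)$. The only point requiring slight care is the domain condition of Definition~\ref{Def_lemma}, which demands that every variable on which $\lambda$ is nonempty be assigned by $\Theta$; this is exactly what $\Theta \models \sigma$ guarantees, since it assigns all of $v_1, \ldots, v_n$. Noting that the $v_i$ are distinct (so that $\lambda(v_i)$ picks up no contribution from other entries) avoids any bookkeeping with the union in the definition of $\lambda(v)$.
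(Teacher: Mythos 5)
Your proof is correct and follows essentially the same route as the paper's: unfold $\Theta \models \sigma$ to get $v_i\Theta = x_i$, use the exclusion $x \neq x_i$ built into the construction of $V_i$ together with the pairwise distinctness of the $v_i$ to conclude $v_i\Theta \notin \lambda(v_i)$, and note that $\lambda(v) = \emptyset$ for all other variables. Your explicit verification of the domain condition in the definition of ``makes false'' is a point the paper leaves implicit, but it is the same argument.
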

\begin{proof}
   Write $ \sigma = \{ v_1/x_1, \ldots, v_n/x_n \}. $
   Let $ c_1, \ldots, c_n \in \Sigma^{+} $ be the clauses
   that were used in the construction of $ \lambda. $ 
   Because $ \Theta \models \sigma, $ we know that
   for every $ i \ ( 1 \leq i \leq n ), $ we have $ v_i \Theta = x_i. $ 
   From the construction of the $ V_i, $ it follows that $ x_i \not \in V_i. $ 
   Because the variables $ v_i $ are pairwise distinct, we 
   have $ V_i = \lambda(x_i). $ It follows that $ x_i \not \in \lambda(v_i). $
   For all other variables $ v $ that do not occur in $ \sigma, $ 
   we have $ \lambda(v) = \emptyset. $ 
   We can conclude that if
    $ \lambda(v) $ is non-empty, then $ v $ equals one of the $ v_i, $ 
   and we have $ v_i \Theta \not \in \lambda(v_i). $ 
\end{proof}
The following lemma states that substlets that are switched off, 
were switched off because they conflict $ \Theta, $ possibly with 
help of a blocking.

\begin{lem}
   \label{Lemma_weak_conflict}
   At every moment during Algorithm~\ref{Algo_unary}, 
   for every refinement $ (c_i \Rightarrow d_i) \in \overline{C}, $ the 
   following holds:
   If $ s \in ( d_i \backslash c_i ), $ then
   either 
   $ s $ is in conflict with $ \Theta, $ or
   $ \Theta \cup \{ s \} \models \sigma, $ for a $ \sigma \in \Sigma^{-}. $ 
\end{lem}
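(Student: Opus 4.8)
The plan is to establish the statement as an invariant maintained throughout the execution of Algorithm~\ref{Algo_unary}, proved by induction on the sequence of operations that modify $\overline{C}$ and $\Theta$. (Since $d_i$ is a subclause of $c_i$, the set $d_i \backslash c_i$ written in the statement is empty; the claim of interest, matching the surrounding text about switched-off substlets, concerns $s \in c_i \backslash d_i$, and this is what I would prove.) The key preliminary observation is that both conditions in the conclusion are \emph{monotone} under extension of $\Theta$: if $s$ conflicts with $\Theta$ it also conflicts with any $\Theta' \supseteq \Theta$, since extending the assignment stack never changes an already-defined value; and if $\Theta \cup \{s\} \models \sigma$, then $\Theta' \cup \{s\} \models \sigma$ as well, because making a substlet true only requires the presence of certain assignments. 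Consequently, every step that merely enlarges $\Theta$ preserves the invariant for all refinements already on the stack, and every backtracking step (\textbf{PICK~3}) simply restores an earlier state that satisfied the invariant by the induction hypothesis.

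It then suffices to check the three places where a refinement is appended. At \textbf{PREPROC} the refinement is $c_k \Rightarrow c_k$, so $c_k \backslash c_k = \emptyset$ and the invariant holds vacuously. At \textbf{FORW~1} the new refinement $c_i \Rightarrow d''$ is obtained from the current refinement $d_i$ by first discarding exactly the substlets in conflict with $\Theta$ (yielding $d'$) and then discarding exactly those $s$ with $\Theta \cup \{s\} \models \sigma$ for some $\sigma \in \Sigma^{-}$ (yielding $d''$). Thus every substlet of $d_i \backslash d''$ satisfies one of the two conditions by construction, and since $c_i \backslash d'' = (c_i \backslash d_i) \cup (d_i \backslash d'')$, combining this with the induction hypothesis for $c_i \backslash d_i$ gives the invariant for the whole of $c_i \backslash d''$.

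The main case, and the only one needing a real argument, is \textbf{PICK~1}, where a non-unit clause $d_i$ is replaced by a single chosen substlet $\overline{v}_j/\overline{x}_j$ while $\Theta$ is extended with its previously unassigned variables. Here I would invoke Lemma~\ref{Lemma_conflict_in_clause}: any two distinct substlets of the matching clause $c_i$ (hence of $d_i \subseteq c_i$) are in conflict with one another, so every $s' \in d_i$ with $s' \neq \overline{v}_j/\overline{x}_j$ disagrees with $\overline{v}_j/\overline{x}_j$ on the value of some variable in their common domain. After $\Theta$ is extended so that it contains the full assignment $\overline{v}_j/\overline{x}_j$, that disagreement forces $s'$ into conflict with $\Theta$. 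Hence all substlets in $d_i \backslash \{\overline{v}_j/\overline{x}_j\}$ satisfy the first condition, and together with the induction hypothesis for $c_i \backslash d_i$ this yields the invariant for $c_i \backslash \{\overline{v}_j/\overline{x}_j\}$. The one point to verify is that extending $\Theta$ here does not spoil the invariant for the other refinements on the stack, which is exactly the content of the monotonicity observation. This disposes of all cases, so the invariant holds at every moment. I expect the \textbf{PICK} case, and specifically the use of Lemma~\ref{Lemma_conflict_in_clause} to turn a pairwise clause conflict into a conflict with the extended $\Theta$, to be the crux of the proof.
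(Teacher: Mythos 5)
Your proof is correct and takes essentially the same route as the paper's: induction over the algorithm's execution, with the two refinement points as the only substantive cases --- \textbf{FORW~1a}, where $d'$ and $d''$ by construction discard exactly the substlets satisfying the two disjuncts, and \textbf{PICK~1}, where Lemma~\ref{Lemma_conflict_in_clause} converts the pairwise conflict between distinct substlets of a clause into a conflict with the extended $\Theta$. You are in fact somewhat more careful than the paper, which reads the stated set difference the same way you do (as $c_i \backslash d_i$) but leaves implicit the points you spell out: the monotonicity of both conditions under extension of $\Theta$, the preservation of the invariant under backtracking, and the vacuous initialization at \textbf{PREPROC}.
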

\begin{proof}
   There are two points at which refinement can take place, 
   {\bf PICK~1} and
   {\bf FORW~1a}. At {\bf PICK~1}, clause $ c_i $ is refined into
   $ \overline{v}_j / \overline{x}_j, $ after which $ \Theta $ is
   extended with $ \overline{v}_j / \overline{x}_j. $
   If some substlet 
   $ s $ occurs in $ c_i \backslash \{ \overline{v}_j / \overline{x}_j \}, $
   then either $ s \in c_i \backslash d_i, $ or
   $ s \in d_i \backslash \{ \overline{v}_j / \overline{x}_j \}. $
   In the first case, the desired property is inherited from the 
   previous state, because it is an invariant.
   In the second case, because $ \Theta $ 
   is extended by $ \overline{v}_j / \overline{x}_j $ at the same time,
   we can apply Lemma~\ref{Lemma_conflict_in_clause}. 

   At {\bf FORW~1a}, if $ s \in d'' \backslash c_i, $ then either
   $ s \in d_i \backslash c_i, \ \ 
     s \in d' \backslash d_i, $ or 
   $ s \in d'' \backslash d'. $ 
   In the first case, 
   the desired property is inherited from the previous state. 
   In the second case, it follows from the construction of $ d', $ that
   $ s $ was in conflict with $ \Theta. $ 
   In the third case, it follows from the construction of $ d'', $ that
   there is a $ \sigma \in \Sigma^{-}, $ for which 
   $ \Theta \cup \{s \} \models \sigma. $
\end{proof}

\noindent
The following property is the essential property, for proving
that Algorithm~\ref{Algo_unary} can always return a conflict lemma. 
\begin{lem}
   \label{Lem_unary_derive}
   Let $ ( \Sigma^{+}, \Sigma^{-} ) $ be a GCSP. 
   Let $ c \in \Sigma^{+} $ be a clause. 
   Let $ \Theta $ be a substitution. Let $ \Lambda $ be a set 
   of lemmas. Assume that there is no $ \sigma \in \Sigma^{-}, $ s.t.
   $ \Theta \models \sigma, $ and no $ \lambda \in \Lambda, $ s.t.
   $ \Theta $ makes $ \lambda $ false.
   Assume that for every substlet $ s \in c, $ either
   \begin{enumerate}
   \item
      $ s $ is in conflict with $ \Theta, $ 
   \item
      $ \Theta \cup \{ s \} \models \sigma, $ 
      for a $ \sigma \in \Sigma^{-}, $ or
   \item 
      $ \Theta \cup \{ s \} $ makes a $ \lambda \in \Lambda $ false.
   \end{enumerate}
   Then it is possible to derive a conflict lemma for $ \Theta $ 
   from $ \Sigma^{+} $ and $ \Lambda, $ by applying the rules in 
   Definition~\ref{Def_reso_proj_sigma}.
\end{lem}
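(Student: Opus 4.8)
The plan is to reduce the statement to constructing a single \emph{valid} lemma that $ \Theta $ makes false; by Definition~\ref{Def_valid_and_conflict} such a lemma is exactly a conflict lemma for $ \Theta, $ and its validity will be automatic because everything is built from the sound rules of Definition~\ref{Def_reso_proj_sigma}. The only source of validity is the clause $ c $ itself, so the backbone of the construction is a single \textbf{PROJECTION} of $ c $: I choose, for each substlet $ s \in c, $ one \emph{witness} assignment $ v_s / x_s $ occurring in $ s, $ and I let $ \pi $ be the projection with $ \pi(v) = \{ x_s \mid v_s = v \}. $ The choice of witness is dictated by the case $ s $ falls into (preferring Case~1 when several apply, so that in Cases~2 and 3 the substlet $ s $ is consistent with $ \Theta $). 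In Case~1, $ s $ conflicts with $ \Theta, $ so I pick $ v_s / x_s \in s $ with $ v_s \Theta $ defined and $ v_s \Theta \neq x_s; $ this is a \emph{good} witness, already false under $ \Theta. $ In Cases~2 and 3, I first observe that $ s $ must assign at least one variable outside $ \mathrm{dom}(\Theta) $: if every variable of $ s $ were already assigned consistently with $ \Theta, $ then $ \Theta \cup \{ s \} $ would agree with $ \Theta $ on the relevant variables, and the hypotheses ``no $ \sigma $ with $ \Theta \models \sigma $'' and ``no $ \lambda \in \Lambda $ false under $ \Theta $'' would already be violated. So I pick the culprit variable $ v_s \notin \mathrm{dom}(\Theta) $ as witness; this is a \emph{bad} witness that $ \Theta $ does not yet refute and that must be eliminated.

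Second, I would eliminate the bad witnesses one at a time by resolving $ \pi $ against the reason attached to the corresponding substlet, maintaining throughout the invariant that the running lemma $ \mu $ is valid and that $ v \Theta \notin \mu(v) $ for every $ v \in \mathrm{dom}(\Theta) $ with $ \mu(v) \neq \emptyset. $ For a Case~3 witness with reason $ \lambda \in \Lambda, $ I resolve $ \mu $ with $ \lambda $ on $ v_s $ using \textbf{RESOLUTION}: since $ \Theta \cup \{ s \} $ makes $ \lambda $ false, $ x_s = v_s(\Theta \cup \{ s \}) \notin \lambda(v_s), $ so the intersection $ \mu(v_s) \cap \lambda(v_s) $ drops $ x_s; $ for the remaining variables $ w $ of $ \lambda, $ falsity of $ \lambda $ under $ \Theta \cup \{ s \} $ gives $ w(\Theta \cup \{ s \}) \notin \lambda(w), $ which preserves the invariant for $ w \in \mathrm{dom}(\Theta) $ (here $ w \Theta = w(\Theta \cup \{ s \}) $ since $ s $ is consistent with $ \Theta $) and, for $ w \notin \mathrm{dom}(\Theta), $ introduces only fresh bad witnesses that are again variables of $ c. $ For a Case~2 witness with blocking $ \sigma, $ I use \textbf{$ \sigma $-RESOLUTION}: range restrictedness of the GCSP supplies, for each variable of $ \sigma $ lying outside $ c, $ a clause all of whose substlets mention that variable, so a $ \sigma $-resolvent $ \rho $ exists; exactly as in Lemma~\ref{Lem_sigma_resolvent_false} its value set $ V_i $ excludes $ x_i, $ so on the variables of $ \sigma $ that $ \Theta $ assigns it is already false, while its remaining active variables are once more variables of $ c $ assigned by $ s. $ Resolving $ \mu $ against $ \rho $ on $ v_s $ then removes $ x_s $ as before.

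The main obstacle I anticipate is \emph{termination} of this elimination phase: each resolution step shrinks the value set on the resolved variable (an intersection) but may, through the union on the other variables, reintroduce new unassigned culprit variables. To control this I would process the bad witnesses in the reverse of the order in which Algorithm~\ref{Algo_unary} assigned them (at \textbf{PICK}) or propagated them (at \textbf{FORW}), exactly as conflict analysis walks backwards through the implication graph in CDCL. Because every culprit variable sits strictly earlier on the assignment stack than the witness currently being eliminated, and each reason (blocking or learnt lemma) refers only to earlier assignments, no cyclic reintroduction can occur and the process bottoms out at variables of $ \mathrm{dom}(\Theta). $ When no bad witness remains, the invariant guarantees $ v \Theta \notin \mu(v) $ for all active variables of $ \mu, $ i.e. $ \Theta \models \neg \mu, $ so the valid lemma $ \mu $ is the required conflict lemma. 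A secondary point needing care is verifying that the $ \sigma $-resolution and resolution steps simultaneously preserve validity and the false-under-$ \Theta $ invariant; this is precisely where Lemma~\ref{Lem_sigma_resolvent_false} and the soundness of the rules of Definition~\ref{Def_reso_proj_sigma} are used.
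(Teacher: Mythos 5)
Your first phase (eliminating Case~2 by $\sigma$-resolution via range restrictedness, then building a projection of $c$ from per-substlet witnesses) matches the paper's proof, but the elimination phase contains a genuine gap. When you resolve the running lemma $\mu$ with a reason $\lambda\in\Lambda$ on the witness variable $v_s$, the \textbf{RESOLUTION} rule takes \emph{unions} on every other variable, so all of $\lambda$'s values on the other unassigned variables of $s$ are imported into $\mu$. These imported values have no attached reason: your hypotheses attach reasons only to the substlets of $c$, i.e.\ to the specific values the substlets assign, whereas a lemma that is false under $\Theta\cup\{s\}$ may carry \emph{arbitrary} value sets on the unassigned variables of $s$ (they need only avoid the values assigned by $s$). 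Your termination argument then appeals to a CDCL-style trail order ("each reason refers only to earlier assignments"), but no such structure exists here: the lemma is a static statement about an arbitrary $\Theta$, $\Lambda$, $c$ satisfying the listed conditions, and nothing prevents reasons from mentioning unassigned variables in both directions. Concretely, take $\Theta=\emptyset$, $\Sigma^{-}=\emptyset$, $c=\{(u,w)/(1,1),\ (u,w)/(2,2)\}$, and $\Lambda=\{\lambda_1,\lambda_2\}$ with $\lambda_1=\{u/\{2\},w/\{2\}\}$ (the reason for $(u,w)/(1,1)$) and $\lambda_2=\{u/\{1\},w/\{1\}\}$ (the reason for $(u,w)/(2,2)$); all hypotheses hold. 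Starting from $\pi=\{u/\{1,2\}\}$, resolving with $\lambda_1$ on $u$ gives $\{u/\{2\},w/\{2\}\}$, then with $\lambda_2$ on $u$ gives $\{w/\{1,2\}\}$, then with $\lambda_1$ on $w$ gives $\{u/\{2\},w/\{2\}\}$ again, and so on: the single running lemma cycles and never becomes false under $\Theta$, so "no cyclic reintroduction can occur" is simply false.

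The missing idea is the paper's recursive branching structure (Lemma~\ref{Lem_unary_derive_recurse}): induct on the number of unassigned variables of $c$, pick \emph{one} unassigned variable $v$, and for \emph{every} value $x\in V$ assigned to $v$ by the non-conflicting substlets, recursively derive a conflict lemma $\lambda_x$ for the extended substitution $\Theta\cup\{v/x\}$. Because $\lambda_x$ is false under $\Theta\cup\{v/x\}$, its active variables all lie in $\mathrm{dom}(\Theta)\cup\{v\}$, i.e.\ the recursion returns "clean" lemmas that carry no uncontrolled values on unassigned variables. A single multi-premise $v$-resolution of the projection $\mu$ with the whole family $\{\lambda_x\}_{x\in V}$ then empties the $v$-entry (each $x\in V$ is excluded by its own $\lambda_x$) while the unions on the other variables stay inside $\mathrm{dom}(\Theta)$ and avoid $\Theta$. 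In the example above this corresponds to first resolving $\lambda_1$ with the projection $\{u/\{2\},w/\{1\}\}$ on $w$ (giving $\{u/\{2\}\}$) and $\lambda_2$ with the \emph{different} projection $\{u/\{1\},w/\{2\}\}$ on $w$ (giving $\{u/\{1\}\}$), and only then resolving both with $\{u/\{1,2\}\}$ on $u$ to obtain the empty (hence conflicting) lemma: a nested derivation using several projections of $c$, which a flat, one-reason-at-a-time process with a single running lemma cannot simulate.
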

\begin{proof}
   We first remove {\bf (2)} by means of $ \sigma $-resolution.
   We will add the resulting $ \sigma $-resolvents to $ \Lambda.$
   For every $ s \in c, $ for which $ {\bf (1),(3)} $ do not apply, 
   $ {\bf (2)} $ must apply. 
   Write $ \sigma = \{ v_1 /x_1, \ldots, v_n/x_n \}. $ 
   Since $ ( \Sigma^{+}, \Sigma^{-} ) $ is range restricted,
   we can find clauses 
   $ c_1, \ldots, c_n \in \Sigma^{+}, $ s.t.
   each $ v_i $ occurs in $ c_i. $  
   We now can construct the $ \sigma $-resolvent. Write
   $ \lambda $ for the resulting lemma. 
   It follows from Lemma~\ref{Lem_sigma_resolvent_false} that 
   $ \lambda $ is false in $ \Theta. $ We can add $ \lambda $ to $ \Lambda. $ 
   At this point, we have for every $ s \in c, $ either
   $ {\bf (1)} $ or $ {\bf (3)}. $ 
   The rest of the proof is Lemma~\ref{Lem_unary_derive_recurse}.  
\end{proof}

\begin{lem}
   \label{Lem_unary_derive_recurse}
   Let $ ( \Sigma^{+}, \Sigma^{-} ) $ be a GCSP.
   Let $ c \in \Sigma^{+} $ be a clause.
   Let $ \Theta $ be a substitution. Let $ \Lambda $ be a set
   of lemmas. Assume that there is no $ \lambda \in \Lambda, $ s.t. 
   $ \Theta $ makes $ \lambda $ false.
   Assume that for every substlet $ s \in c, $ either
   \begin{enumerate}
   \item
      $ s $ is in conflict with $ \Theta, $ or 
   \item
      $ \Theta \cup \{ s \} $ makes a $ \lambda \in \Lambda $ false.
   \end{enumerate}
   Then it is possible, using the rules in
   Definition~\ref{Def_reso_proj_sigma}, to obtain a conflict lemma for
   $ \Theta $ from $ c $ and $ \Lambda. $ 
\end{lem}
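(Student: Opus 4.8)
The plan is to build the conflict lemma by a conflict-analysis procedure in the spirit of CDCL: start from a projection of $c$ and then repeatedly resolve away the variables that $\Theta$ does not yet assign, using the lemmas of $\Lambda$ as the ``reasons'' for the case~(2) substlets. Throughout, validity will be automatic. A projection of a clause $c\in\Sigma^{+}$ is valid, every $\lambda\in\Lambda$ is valid, and by the remark following Definition~\ref{Def_reso_proj_sigma} the rules preserve validity; hence every lemma I produce is valid, and the only thing that has to be engineered is that $\Theta$ makes the \emph{final} lemma false, i.e.\ that all its nonempty components sit on variables of $\mathrm{dom}(\Theta)$ and avoid the values that $\Theta$ assigns there.

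First I would dispose of the base case, where every substlet of $c$ is in conflict with $\Theta$ (i.e.\ case~(1) holds everywhere). For each $s\in c$ choose an assignment $v_s/x_s\in s$ witnessing the conflict, so $v_s\in\mathrm{dom}(\Theta)$ and $v_s\Theta\neq x_s$, and form the projection $\mu$ with $\mu(v)=\{\,x_s\mid v_s=v\,\}$. By construction $\mu$ is a projection of $c$, hence valid; all its nonempty components lie on variables of $\mathrm{dom}(\Theta)$, and for each such $v$ every value $x_s\in\mu(v)$ differs from $v\Theta$, so $\Theta\models\neg\mu$. Thus $\mu$ is already the required conflict lemma.

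For the inductive step I would induct on the set of variables occurring in $c$ but not in $\mathrm{dom}(\Theta)$ — the ``extra'' variables introduced by the case~(2) substlets — eliminating them one at a time. Pick a case~(2) substlet $s_0$ with reason $\lambda_0\in\Lambda$, so $\Theta\cup\{s_0\}$ makes $\lambda_0$ false while $\Theta$ does not. Since the falsity is new, there must be a variable $w_0\in\mathrm{dom}(s_0)\setminus\mathrm{dom}(\Theta)$ with $\lambda_0(w_0)\neq\emptyset$ and $s_0(w_0)=a_0\notin\lambda_0(w_0)$. I then build a projection $P$ of $c$ that covers $s_0$ through the assignment $w_0/a_0$ and is arranged so that $P(w_0)\cap\lambda_0(w_0)=\emptyset$. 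Resolving $P$ and $\lambda_0$ on the pivot $w_0$ sends the $w_0$-component to $P(w_0)\cap\lambda_0(w_0)=\emptyset$ and unions the remaining components, producing a valid lemma in which $w_0$ no longer occurs.

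The hard part is termination, i.e.\ exhibiting a well-founded measure that each elimination strictly decreases. Because all substlets of a clause share the same domain (Definition~\ref{Def_substlet}), the reasons of the various case~(2) substlets range over a \emph{common} pool of extra variables, so the union step of resolution can reintroduce extra variables coming from the reasons of the \emph{other} substlets that $P$ must cover, and a careless elimination could cycle. The remedy, exactly as in CDCL conflict analysis, is to eliminate the extra variables in the reverse of the order in which the search assigned them: the reason justifying an extra variable $w$ was generated when the earlier variables were already fixed, so it mentions only variables assigned before $w$; eliminating the most recently assigned extra variable therefore never reinstates one that has already been removed, and the set of remaining extra variables strictly shrinks. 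Once no extra variable is left, the accumulated lemma is valid and has all its nonempty components on $\mathrm{dom}(\Theta)$-variables, avoiding the values of $\Theta$; hence it is false in $\Theta$ and is the desired conflict lemma. I expect this ordering/acyclicity argument to be the real obstacle, the projection and the base case being routine.
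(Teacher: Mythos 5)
Your base case matches the paper's, but the inductive step fails at its very first move: the projection $P$ you require does not always exist. Take $c = \{\,(w)/(1),\ (w)/(2)\,\}$ with $w \notin \mathrm{dom}(\Theta)$, and $\Lambda = \{\lambda_1, \lambda_2\}$ where $\lambda_1 = \{\,w/\{2,3\}\,\}$ and $\lambda_2 = \{\,w/\{1,3\}\,\}$. Neither lemma is false in $\Theta$ (each has its nonempty component on the unassigned variable $w$), no substlet of $c$ conflicts $\Theta$, and both substlets are case~(2), with reasons $\lambda_1$ and $\lambda_2$ respectively. Choosing $s_0 = (w)/(1)$, $\lambda_0 = \lambda_1$, $w_0 = w$, $a_0 = 1$: the only projection of $c$ is $\{\,w/\{1,2\}\,\}$, because the other substlet $(w)/(2)$ can only be covered by its single assignment $w/2$; hence $2 \in P(w_0) \cap \lambda_0(w_0)$, and the condition $P(w_0) \cap \lambda_0(w_0) = \emptyset$ is unachievable (by symmetry, the same happens if you start from $(w)/(2)$). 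The underlying point is that binary resolution against a single reason cannot suffice: no single $\lambda_0 \in \Lambda$ needs to exclude all the values that the case~(2) substlets give to the pivot. One must resolve the projection with \emph{several} lemmas at once, one per pivot value, which is exactly what the $m$-ary RESOLUTION rule of Definition~\ref{Def_reso_proj_sigma} provides: the pivot component of the resolvent is $\mu(v) \cap \bigcap_{x} \lambda_x(v)$, and each value $x$ is killed by its own $\lambda_x$.

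Your termination argument has a second, independent problem: it is not grounded in the hypotheses. You appeal to ``the reverse of the order in which the search assigned'' the extra variables and to reasons mentioning only earlier-assigned variables, but the lemma quantifies over an \emph{arbitrary} $\Theta$, $c$, and $\Lambda$ satisfying the two listed conditions; there is no search, no assignment order, and nothing prevents the lemmas of $\Lambda$ from mentioning the extra variables in cyclic ways. The paper's proof avoids both obstacles with a different induction: induct on the number of variables of $c$ unassigned by $\Theta$; pick one unassigned variable $v$, let $V$ be the values given to $v$ by the non-conflicting substlets, and for each $x \in V$ apply the inductive hypothesis (or take a lemma of $\Lambda$ directly, if one is already false there) to $\Theta_x = \Theta \cup \{v/x\}$, obtaining a conflict lemma $\lambda_x$ for $\Theta_x$. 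Because $\lambda_x$ is \emph{false in} $\Theta_x$, all its nonempty components lie inside $\mathrm{dom}(\Theta) \cup \{v\}$; so resolving the projection $\mu$ with all the $\lambda_x$ simultaneously on $v$ produces a lemma supported on $\mathrm{dom}(\Theta)$ that is false in $\Theta$. Extra variables can never be reintroduced, so no ordering argument is needed at all.
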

\begin{proof}
   We prove the lemma by induction on the number of unassigned variables
   in $ c. $ 
   Let $ c_1 $ be the part of $ c $ to which {\bf (1)} applies,
   and let $ c_2 = c \backslash c_1. $ 
 
   Since each $ s \in c_1 $ is in conflict with $ \Theta, $  
   one can obtain a projection $ \mu_1 $ of $ c_1 $ by picking
   from each $ s \in c_1 $ an assignment $v/x $ for which
   $ v \Theta $ is defined and $ v \Theta \not = x. $ 
   By construction, $ \mu_1 $ will be false in $ \Theta. $ 

   If there are no unassigned variables in $ c, $ then 
   $ c_2 $ must be empty. This means that $ \mu_1 $ is a projection
   of $ c $ and false in $ \Theta, $ so we are done.

   Otherwise, select a $ v $ in $ c $ that is unassigned by
   $ \Theta. $ 
   Let $ V $ be set of values that are assigned to $ v $ 
   by the substlets in $ c_2. $ 
   Define $ \mu_2 = \{ v / V \}. $ Clearly, $ \mu_2 $ is a 
   projection of $ c_2, $ and $ \mu = \mu_1 \cup \mu_2 $ is a projection
   of $ c. $ 
   For each value $ x \in V, $ define $ \Theta_x = \Theta \cup \{ v/x \}. $ 
   If there is no $ \lambda \in \Lambda, $ 
   that is false in $ \Theta_x, $ then $ c, \ \Theta_{x}, \ \Lambda $
   still satisfy the conditions of Lemma~\ref{Lem_unary_derive_recurse}. 
   Moreover, since $ \Theta_x $ contains an assignment to $ v, $ 
   the number of unassigned variables in $ c $ has decreased by one.
   This means that we can assume, by induction, that we can derive
   a lemma $ \lambda_x $ that is false in $ \Theta_{x}. $  
   If $ \lambda_{x} $ is also a conflict lemma of $ \Theta, $ 
   we have completed the proof. 
   Otherwise, we can assume that $ \lambda_x $ is added to $ \Lambda. $ 

   At this point, $ \Lambda $ contains a conflict lemma 
   $ \lambda_x $ for every
   $ \Theta \cup \{ v/x \} $ with $ x \in V. $ 
   Let $ \lambda $ be the $ v $-resolvent of the projection
   $ \mu $ constructed above, and the $ \lambda_x, $ i.e. 
   \[ \lambda = 
      \{ \ v / ( \ \mu(v) \cap \bigcap_{x \in V} \lambda_x(v) \ ) \ \} \cup
      \{ \ v' / ( \ \mu(v') \cup \bigcup_{x \in V} \lambda_x(v') \ ) 
                \ | \ v' \not = v \ \}. \]
   In order to show that $ \lambda $ is false in $ \Theta, $ 
   we have to show that for every variable $ v', $ for which
   $ \lambda(v') \not = \emptyset, \ \ v' \Theta $ is defined,
   and $ v' \Theta \not \in \lambda(v'). $ 
   \begin{itemize}
   \item
      For $ v, $ we just show that $ \lambda(v) = \emptyset. $ 
      We have $ \mu(v) = \mu_2(v), $ because $ \mu_1(v) = \emptyset. $ 
      It follows from the fact that $ v $ is undefined in $ \Theta, $ 
      and $ \mu_1 $ is false in $ \Theta. $ 
      For each $ x \in \mu_2(v), $ we know that $ \lambda_{x} $ is
      false in $ \Theta \cup \{ v/x \}, $ which implies that
      $ x \not \in \lambda_x(v). $ This implies that $ x $ is not
      in the intersection of all $ \lambda_x(v), $  
      which in turn implies $ \mu(v) $ and $ \bigcap_{x \in V} \lambda_x(v) $ 
      have no elements in common. 
   \item
      If $ v' \not = v $ and $ \lambda(v') \not = \emptyset, $ 
      then either $ \lambda_x(v') \not = \emptyset, $ for an $ x \in V, $ 
      or $ \mu(v') \not = \emptyset. $ In the first case, it follows
      from the fact that $ \lambda_x $ is false in $ \Theta \cup \{ v/x \} $ 
      and $ v' \not = v, $ 
      that $ v' \Theta $ is defined. In the second case, 
      we know that $ \mu_2(v') $ only assigns to $ v, $ so that 
      $ \mu_1(v') \not = \emptyset. $ Since we know that $ \mu_1 $ is false
      in $ \Theta, $ we know $ v' \Theta $ is defined.

      At this point, we are certain that $ v' \Theta $ is defined,
      so that we can start showing that 
      $ v' \Theta \not \in \mu(v') \cup \bigcup_{x \in V} \lambda_x(v'). $
      If $ v' \Theta \in \mu(v'), $ then, because $ \mu_2 $ 
      only assigns to $ v, $ we have $ v' \Theta \in \mu_1(v') $ 
      This is impossible because $ \mu_1 $ is false in $ \Theta. $ 

      We can also not have $ v' \Theta \in \lambda_x(v'), $ 
      for any $ x \in V, $ because this would imply that
      $ v' ( \Theta \cup \{ v/x \} ) \in \lambda_x(v'), $ which 
      contradicts the fact that $ \lambda_x $ is false in
      $ \Theta \cup \{ v/x \}. $  \qedhere
   \end{itemize}
\end{proof} 

\noindent
At this point, it is straightforward to prove that
Algorithm~\ref{Algo_unary} can always derive a conflict lemma. 
There are two points in Algorithm~\ref{Algo_unary} where the
substitution is extended. We show for both points that it is possible
to obtain a conflict lemma when the substitution is restored. 

\begin{description} 
\item[FORW 1b]
   The substitution $ \Theta $ is extended by the common 
   assignments in $ d''. $ 
   Since the extension of $ \Theta $ had a conflict lemma,
   we know that for each $ s \in d'', \ \ \Theta \cup \{ s \} $ 
   has a conflict lemma. 
   It follows from Lemma~\ref{Lemma_weak_conflict} that for every 
   substlet $ s $ in $ d'' \backslash \{ c_i \}, $ either $ s$ is in conflict
   with $ \Theta, $ or $ \Theta \cup \{ s \} $ implies $ \sigma, $ 
   for a blocking $ \sigma \in \Sigma^{-}. $ 
   From Lemma~\ref{Lem_not_alone}, we know that
   there is no $ \sigma \in \Sigma^{-}, $ s.t. $ \Theta \models \sigma. $
   It follows that we can apply Lemma~\ref{Lem_unary_derive} with
   $ \Lambda = \{ \lambda \} $ to obtain a conflict lemma
   for $ \Theta. $ 
\item[PICK]
   Let $ c_i \Rightarrow d_i $ be the refinement that was selected
   by PICK.
   Let $ \Lambda $ be the set of conflict lemmas that were returned
   by the recursive calls of {\bf findmatch}.
   If there is a $ \lambda \in \Lambda $ that is false in $ \Theta, $ 
   we can return $ \lambda. $
   Otherwise, we know that no $ \lambda \in \Lambda $ is false
   in $ \Theta. $ 
   From Lemma~\ref{Lem_not_alone}, we know that
   there is no $ \sigma \in \Sigma^{-}, $ s.t. $ \Theta \models \sigma. $ 
   
   By Lemma~\ref{Lemma_weak_conflict}, every substlet
   $ s \in ( c_i \backslash d_i ), $ 
   is either in conflict with $ \Theta, $ or there exists a 
   $ \sigma \in \Sigma^{-}, $ s.t. $ \Theta \cup \{ s \} \models \sigma. $ 
   This implies that we can apply Lemma~\ref{Lem_unary_derive} to obtain a  
   conflict lemma of $ \Theta. $  
\end{description}

In an implementation of Algorithm~\ref{Algo_unary}, there is no need to 
follow the rules of Definition~\ref{Def_reso_proj_sigma} carefully, because 
the conflict lemma can be constructed immediately from the premisses 
of Lemma~\ref{Lem_unary_derive}. 

In order to make Algorithm~\ref{Algo_unary} reuse conflict lemmas, one 
has to add before {\bf FORW~1}: If there is a 
$ \lambda \in \Lambda $ containing variable $v, $ s.t.
$ \Theta $ makes $ \lambda $ false, then return $ \lambda. $ 

Integrating lemmas into the refining step of {\bf FORW}~1 seems difficult, 
because the notion of connection (Definition~\ref{Def_connected_variable}) 
must be extended to include 
`$ v $ and $ w $ occur together in a lemma $ \lambda \in \Lambda. $' 
Currently we don't know how to efficiently enumerate variables that are 
connected through a lemma. 

\section{Matching Based on Local Consistency Checking}
\label{Sect_IJCAR2016}

We will discuss the matching algorithm of \cite{deNivelle2016a}.
Its performance turned out not competitive, so we will omit most of 
the details, in particular the completeness proofs for learning.  
The algorithm is based on the fact that local consistency
checking rejects a large percentage of GCSPs
without backtracking. 

Local consistency checking is the following procedure:
For every clause $ c = \{ s_1, \ldots, s_n \} \in \Sigma^{+}, $ check,
for all sets of clauses $ C $ with size $ S \geq 1, $ 
if $ \{ s_i \} \cup C $ has a solution. If not, then
remove $ s_i $ from $ c. $ 
Keep on doing this, until no further changes are possible or a clause 
has become empty. The procedure is described in detail
in Section~\ref{Sect_local_consistency_checking}. 
Local consistency checking with small $ S $ rejects a large
percentage of instances without backtracking. 
It therefore seemed reasonable to 
combine local consistency checking with backtracking in the
following way: 
\begin{description}
\item[FILTER]
   Apply local consistency checking. If this results in an empty
   clause, then backtrack to the last decision. If there are no decisions
   left, then report failure. 
\item[DECIDE]
   If every clause has become unit, then report a solution. 
   Otherwise, pick a non-unit clause, and replace it by a singleton
   consisting of one of its substlets. 
   Continue at {\bf FILTER}.
   If this results in an empty clause, then backtrack through the 
   remaining substlets of the clause. 
\end{description}
The assumption was that local consistency checking could play
the same role as unit propagation in DPLL, and that local consistency
checking would be equally effective on the subproblems obtained 
during backtracking, as on the initial problem. 
This assumption turned out false. 
In \cite{deNivelle2016a}, the algorithm is described for
$ S = 1, $ but we have implemented it for arbitrary $ S \geq 1. $
Note that a size of $ S $ means that $ \| C \| = S, $ so that
$ \{ c_i \} \cup C $ has size $ S+1. $ 
Performance results are presented in 
Figure~\ref{Fig_tyebye_z_maslem} in Section~\ref{Sect_experiments}.
It can be seen that 
$ S > 1 $ does not perform better than $ S = 1. $ It rarely
creates less lemmas, and it usually costs more time. 
 
The main observation to be made is that the algorithm is
not close to being competitive against 
Algorithm~\ref{Algo_unary} with flat lemmas,
or translation to SAT. In addition to that, 
it turned out rather unpleasant
to implement, much harder than Algorithm~\ref{Algo_unary}. Especially
$ S > 1 $ is difficult to handle, because the resolution rules
for obtaining lemmas become rather complicated. This does not
only apply to the implementation, but also to the theoretical
description. 

We define the lemmas that were used by 
the matching algorithm, and the reasoning rules
that it uses. 
A clause can be viewed as a special form of lemma 
in which the substlets have the same domain. 

\begin{defi}
   \label{Def_full_lemma} 
   A \emph{lemma} is a finite set of substlets, possibly
   with different domains. 

   If $ \lambda $ is a lemma, and $ \Theta $ a substitution,
   then $ \Theta $ makes $ \lambda $ true if there
   is a substlet $ ( \overline{v}/\overline{x} ) \in \lambda, $ 
   s.t. $ \Theta $ makes $ \lambda $ true.
   $ \Theta $ makes $ \lambda $ false if every
   substlet $ ( \overline{v}/\overline{x} ) \in \lambda $ 
   is in conflict with $ \Theta. $ 
   We say that $ \lambda $ is \emph{valid} relative to
   $ ( \Sigma^{+}, \Sigma^{-} ), $ if
   $ \Theta $ is true in every solution $ \Theta $ of 
   $ ( \Sigma^{+}, \Sigma^{-} ). $ 
   We call $ \lambda $ a \emph{conflict lemma} if $ \lambda $ is
   false in the current $ \Theta $ and valid
   $ ( \Sigma^{+}, \Sigma^{-} ). $ 
\end{defi}
Learning was based on the following resolution rules: 

\begin{defi}
   \label{Def_full_conflict_resolution}
   Let $ \lambda_1 $ and $ \lambda_2 $ be lemmas. 
   Let $ \mu_1 \subseteq \lambda_1, $ 
   and let $ \mu_2 \subseteq \lambda_2. $ 
   Assume that every $ s_1 \in \mu_1 $ is in conflict with
   every $ s_2 \in \mu_2. $ 
   Then $ ( \lambda_1 \backslash \mu_1 ) \cup 
          ( \lambda_2 \backslash \mu_2 ) $ 
   is a resolvent of $ \lambda_1 $ and $ \lambda_2. $ 
\end{defi}

One can resolve 
$ \lambda_1 = \{ \ (x,y)/(1,2), \ (x,y)/(1,1), \ (x,y)/(3,3) \ \} $ with  
$ \lambda_2 = \{ \ (y,z)/(1,2),$ $ \ (y,z)/(2,1) \ \} $ based on
$ \mu_1 = \{ \ (x,y)/(1,2), \ (x,y)/(3,3) \ \}, $ and 
$ \mu_2 = \{ \ (y,z)/(1,2) \ \}. $ 
The resolvent is $ \{ \ (x,y)/(1,1), \ (y,z)/(2,1) \ \}. $ 

\begin{defi}
   \label{Def_full_sigma_resolution} 
   Let $ \sigma \in \Sigma^{-} $ be a blocking. 
   Let $ c_1, \ldots, c_n \in \Sigma^{+} $ be a sequence
   of clauses containing all variables of $ \sigma. $
   For each $ c_i, $ let 
   $ \rho_i = \{ \ s \in c_i \ | \ s \mbox{ is in conflict with } \sigma \ \}. $ 
   Then $ \rho_1 \cup \cdots \cup \rho_n $ is a \emph{$ \sigma $-resolvent}
   of $ c_1, \ldots, c_n. $ 
\end{defi}

% \noindent
Using $ \lambda_1, \lambda_2 $ given above, and blocking 
$ (x,z)/(1,2), $ one can obtain the $ \sigma $-resolvent
$\{ \ (x,y) / (3,3), \ (y,z)/(2,1) \ \}.$

It is easy to see that both conflict resolution and $ \sigma $-resolution 
are valid reasoning rules, which implies that every lemma that was
derived by repeated application of resolution from the original
clauses in $ \Sigma^{+}, $ is valid.  

In \cite{deNivelle2016a}, it was shown that a matching algorithm
using $ S = 1 $ can always obtain a conflict lemma using
resolution and $ \sigma $-resolution. For $ S > 1, $ an additional
rule, called \emph{product resolution}, is required. 
Results are listed in Figure~\ref{Fig_tyebye_z_maslem}.

After observing that Algorithm~\ref{Algo_unary} improves by a 
factor $ 500 $ when lemmas are flattened, we tried the same
with the refining algorithm. Whenever a new lemma is
derived, the assignments that do not contribute to conflicts are
removed from the substlets. Different from Algorithm~\ref{Algo_unary}, 
this does not necessarily lead to a lemma consisting only of 
single-assignment substlets, but in most cases it does. 
Surprisingly, this has a strong, negative impact on the performance.

\section{Translation to SAT}
\label{Sect_SAT_trans}

\noindent
Translating an instance of the matching problem to SAT is easy, 
and modern SAT solvers have become very effective. 
As a consequence, translation to SAT should be attempted.
In this section, we give two methods of translating GCSP into SAT.
The translations are not complicated, 
and MiniSat \cite{Minisat2004} performs rather well on the results
of the translations. Results are listed in the last two columns of
Figure~\ref{Fig_unary_sat} and in Figure~\ref{Fig_unary_sat2}
in Section~\ref{Sect_experiments}.
The results suggest that translation to SAT has a performance
that is comparable with Algorithm~\ref{Algo_unary}.

In our first translation only substlets are translated. We assign 
propositional variables to the substlets, specify that at least 
one substlet from each clause has to be selected, and list the conflicts 
between the substlets.

\begin{defi}
   \label{Def_trans_general}
   We assume a general mapping $ [ \ ] $ that transforms mathematical
   objects into distinct propositional variables.
\end{defi}

\begin{defi}
   \label{Def_trans_SAT1}
   Let $ ( \Sigma^{+}, \Sigma^{-} ) $ be GCSP. 
   The translation into propositional logic has form
   $ ( A, P ), $ 
   where $ A $ is a set of atoms, and $ P $ is a set of 
   clauses over $ A. $
   Assume that the GCSP has form $ ( \Sigma^{+}, \Sigma^{-} ), $ 
   assume that $ \Sigma^{+} $ contains $ n $ clauses, 
   and write $ \{ s_{i,1}, \ldots, s_{i,k_i} \} $ for the $i$-th clause
   of $ \Sigma^{+}. $ 

   The set of atoms is defined as 
   $ A = \{ [ s_{i,j} ] \ | \ 1 \leq i \leq n, \
                              1 \leq j \leq k_n \}. $ 
   The clause set $ P $ is defined as follows: 
   \begin{enumerate}
   \item
      For every $ c_i = \{ s_{i,1}, \ldots, s_{i,k_i} \} \in \Sigma^{+} \
      ( 1 \leq i \leq n ), $ 
      the propositional clause set $ P $ contains the propositional clause
      $ \{ \ [s_{i,1}], \ldots, [s_{i,k_i}] \ \}, $ 
      and for every $ j_1,j_2 \ ( 1 \leq j_1 < j_2 \leq k_i ) $ the
      clause $ \{ \ \neg [ s_{i,j_1} ], \ \neg [ s_{i,j_2} ] \ \}. $ 
   \item
      For every pair of distinct clauses 
      $ c_{i_1}, c_{i_2} \in \Sigma^{+} $ 
      that share a variable, for every substlet
      $ s \in c_{i_1}, \ \ P $ contains the clause
      \[ \{ \ \neg [s] \ \} \cup \{ \ [ s' ] \in c_{i_2} \ | \ 
          s' \in c_{i_2}, \mbox{ and }
          s' \mbox{ is not in conflict with } s \ \}. \]
   \item
      For every blocking $ \sigma \in \Sigma^{-}, $ 
      we assume that there is a way of selecting a 
      most suitable subset $ C_{\sigma} $ of $ \Sigma^{+} $ that
      contains all variables of $ \sigma. $ 
      Then $ P $ contains the clause 
      \[ \{ \ [s] \ | \ \exists c \in C_{\sigma}, \mbox{ s.t. } 
                     s \in c 
                     \mbox{ and } s 
                     \mbox{ is in conflict with } \sigma \ \}. \] 
   \end{enumerate} 
\end{defi}
The first part specifies that exactly 
one substlet must be selected from each $ c \in \Sigma^{+}. $ 
The second part specifies
that if one selects a substlet $ s $ from $ c_{i_1}, $ one has
to select a substlet $ s' $ from $ c_{i_2} $ that is not in conflict
with $ s. $ 
The third part of Definition~\ref{Def_trans_SAT1} can be
viewed as an application of
$ \sigma $-RESOLUTION (Definition~\ref{Def_full_sigma_resolution}).

The second translation differs from the first translation in
the fact that it does not only translate substlets, 
but also variable assignments. In addition
to the substlets, it assigns propositional variables to 
variable assignments $ v/x. $ It specifies the dependencies 
between substlets and variable assignments. Instead of 
relying on $ \sigma $-RESOLUTION, blockings can be specified
directly in terms of the forbidden variable assignments. 

\begin{defi}
   \label{Def_trans_SAT2}
   Let $ ( \Sigma^{+}, \Sigma^{-} ) $ be a GCSP. 
   Write $ \Sigma^{+} = \{ c_1, \ldots, c_n \}. $ 
   Write each $ c_i $ in the form $ \{ s_{i,1}, \ldots, s_{i,k_i} \}. $ 
   The translation to propositional logic has form
   $ ( A, P ), $ where 
   $ A $ is the set of atoms used in the translation,
   and $ P $ is the set of clauses.
   The set of atoms $ A $ is defined as 
   \[ \{ \ [s_{i,j}] \ | \ 1 \leq i \leq n, \ 
                           1 \leq j \leq k_i \ \} \cup
      \{ \ [v/x] \ | \ 
           (v/x) \mbox{ occurs in a substlet in } \Sigma^{+} \ \}. \]
   The set of propositional clauses $ P $ is obtained as follows: 
   \begin{enumerate}
   \item
      For every clause $ c_i, \ (1 \leq i \leq n), $ 
      clause set $ P $ contains the clause
      $ \{ \ [ s_{i,1} ], \ldots, [ s_{i,k_i} ] \ \}. $ 
   \item
      For every substlet 
      $ s_{i,j} $ with $ 1 \leq i \leq n, \ 1 \leq j \leq k_{i}, $ 
      for every assignment $ v/x $ that occurs in $ s_{i,j}, $ 
      clause set $ P $ contains the clause
      $ \{ \ \neg [ s_{i,j} ], \ [ v/x ] \ \}. $ 
   \item
      For every variable $ v $ that occurs in $ \Sigma^{+}, $ 
      for every two distinct values $ x_1, x_2, $ s.t.
      $ v/x_1 $ and $ v/x_2$ occur somewhere in substlets in $ \Sigma^{+}, $ 
      clause set $ P $ contains the clause
      $ \{ \ \neg [ v/x_1], \ \neg [ v/x_2 ] \ \}. $ 
   \item
      For every blocking $ \sigma \in \Sigma^{-}, $ 
      if every $ (v/x) \in \sigma $ occurs somewhere in a clause
      in $ \Sigma^{+}, $ then 
      clause set $ P $ contains the clause
      $ \{ \ \neg [ v/x ] \ | \ (v/x) \in \sigma \ \}. $ 
      If some $ (v/x) \in \sigma $ does not occur in $ \Sigma^{+}, $ then
      $ \sigma $ is impossible, and there is no need to generate a clause
      for it. 
   \end{enumerate}
\end{defi}

\noindent
We show correctness of Definition~\ref{Def_trans_SAT2}. 
If $ ( \Sigma^{+}, \Sigma^{-} ) $ has a solution $ \Theta, $ 
one can define a satisfying interpretation $ I $ for $ (A,P) $ as
follows: 
\begin{itemize}
\item
   For $ 1 \leq i \leq n, \ 1 \leq j \leq k_i, $
   set $ I( \ [ s_{i,j} ] \ ) = {\bf t} $ iff $ \Theta \models s_{i,j}. $
\item
   For every assignment $ v/x $ occurring in a substlet $ s $ 
   occurring in a clause $ c_i, $ set 
   $ I( \ [v/x] \ ) = {\bf t} $ iff $ v \Theta = x. $ 
\end{itemize}
It is easily checked that $ I $ makes all clauses in
Definition~\ref{Def_trans_SAT2} true.

For the other direction, assume that $ (A,P) $ has a satisfying 
interpretation $ I. $
Define $ \Theta = \{ \ (v/x) \ | \ I( \ [v/x] \ ) = {\bf t} \ \}. $
By part~4, $ \Theta $ does not contain conflicting assignments. 
By part~1 and part~2, $ \Theta $ contains an assignment for 
every variable occurring in $ \Sigma^{+}. $
Because of part~3, $ \Theta $ does not imply a blocking 
$ \sigma \in \Sigma^{-}. $
By part~1 and part~2, every $ c_i \in \Sigma^{+} $ contains one
substlet that agrees with $ \Theta. $ 

\noindent
We end the section with an example of both translations: 

\begin{exa}
   \label{Example_SAT1_SAT2} 
   We will translate the following GCSP. As usual, $ \Sigma^{+} $ 
   and $ \Sigma^{-} $ are separated by a horizontal bar.
   \[
      \begin{array}{l}
         (X,Y) \ / \ ( 0,1 ) \ | \ ( 1,0 ) \\
         (Y,Z) \ / \ ( 0,0 ) \ | \ ( 0,1 ) \ | \ ( 1,0 ) \\
         \hline
         (X,Z) \ / \ ( 0,0 ) \\
         (X,Z) \ / \ ( 1,1 ) \\
      \end{array}
   \]
   $ \Sigma^{+} $ alone has three solutions:
   \[ \begin{array}{l}
         \Theta_1 = \{ \ X := 0, \ Y := 1, \ Z := 0 \ \}, \\
         \Theta_2 = \{ \ X := 1, \ Y := 0, \ Z := 0 \ \}, \\ 
         \Theta_3 = \{ \ X := 1, \ Y := 0, \ Z := 1 \ \}. \\
      \end{array}
   \]
   The first solution is blocked by $ (X,Z)/(0,0), $ the
   third solution is blocked by $ (X,Z)/(1,1), $ so that 
   only $ \Theta_2 $ is a solution of the complete GCSP.
   Assume that
   \[ [ (X,Y)/(0,1) ] = 1, \ \ 
      [ (X,Y)/(1,0) ] = 2, \ \ 
      [ (Y,Z)/(0,0) ] = 3, \ \ 
      [ (Y,Z)/(0,1) ] = 4, \ \ 
      [ (Y,Z)/(1,0) ] = 5. \]
   Definition~\ref{Def_trans_SAT1} constructs the following translation:  
   \[
       {\rm Part}~1: 
       \left ( 
       \begin{array}{ccc}
          1 & 2 \\
          3 & 4 & 5 \\ 
          -1 & -2 \\
          -3 & -4 \\
          -3 & -5 \\
          -4 & -5 \\
       \end{array}
       \right ) 
       \ \ {\rm Part}~2: 
       \left ( 
       \begin{array}{ccc}
          -1 & 5 \\
          -2 & 3 & 4 \\
          -3 & 2 \\
          -4 & 2 \\
          -5 & 1 \\
       \end{array}
       \right ) 
       \ \ 
       {\rm Part}~3: 
       \left ( 
       \begin{array}{ccc} 
          1 & 4 \\
          2 & 3 & 5 \\
      \end{array}
      \right ) 
   \]
   The only satisfying interpretation is $ \{ -1, 2, 3, -4, -5 \}, $ 
   which corresponds to $ \Theta_2. $
   In order to apply the second translation, assume that
   \[ [X/0] = 6, \ \ 
      [X/1] = 7, \ \ 
      [Y/0] = 8, \ \
      [Y/1] = 9, \ \ 
      [Z/0] = 10, \ \ 
      [Z/1] = 11. \] 
   The second translation constructs 
   \[
       {\rm Part}~1:
       \left (
       \begin{array}{ccc}
          1 & 2 \\
          3 & 4 & 5 \\
       \end{array}
       \right )
       \ \ {\rm Part}~2:
       \left (
       \begin{array}{ccc}
          -1 & 6 \\
          -1 & 9 \\
          -2 & 7 \\
          -2 & 8 \\
          -3 & 8 \\
          -3 & 10 \\
          -4 & 8 \\
          -4 & 11 \\
          -5 & 9 \\
          -5 & 10 \\
       \end{array}
       \right )
       \ \ 
       {\rm Part}~3:
       \left (
       \begin{array}{ccc}
          -6 & -7 \\
          -8 & -9 \\
          -10 & -11 \\  
      \end{array}
      \right )
      \ \ {\rm Part}~4:
      \left (
      \begin{array}{ccc}
         -6 & -10 \\
         -7 & -11 \\
      \end{array}
      \right ) 
   \]
   Its only satisfying interpretation is
   $ \{ -1, 2, 3, -4, -5, -6, 7, 8, -9, 10, -11 \}, $ which
   again corresponds to $ \Theta_2. $ 
\end{exa}

\section{An Input Format for GCSP} 
\label{Sect_input_format}

\noindent
Since we are claiming that GCSPs are fundamental enough
to study on their own, and may have applications outside
of geometric resolution, we made our implementation publicly 
available (\cite{deNivelle2018a}).
In this section, we define the input format for GCSP,
which is used by our implementation.  
The format is similar to the DIMACS format
for satisfiability (\cite{DimacsSAT1993}).
Similar to the DIMACS format, variables and constants
are represented by integers. 
Since GCSP has no polarity (there is no negation), all integers
are non-negative.

\begin{defi}
   \label{Def_DIMACS}
   We define a representation for GCSP. 
   Input is represented in plain ASCII. The format never distinguishes
   between upper and lower case. 
   \begin{itemize}
   \item
      Input starts with whitespace, possibly mixed with
      comment lines.
      A comment line is a line whose first non-whitespace character is
      a \verb+`c'+ or a \verb+`C'+. The initial comment lines are ignored.

   \item
      After that comes a line of form
      \begin{verbatim}
      p gcsp nrvars nrconsts nrclauses nrblockings \end{verbatim}
      \verb+nrvars+ is the number of variables in the problem,
      \verb+nrconsts+ is the number of constants in the problem.
      Both need not be exact, but must be upperbounds. More
      precisely, both variables and constants are represented by
      non-negative integers, and \verb+nrvars,nrconsts+ must be bigger
      than any variable or constant that appears in the problem.
 
      \verb+nrclauses+ must be the exact number of clauses,
      and \verb+nrblockings+ must be the exact number of blockings.
 
   \item
      A clause has form \verb+V var1 ... varV S subst1 ... substS+.
      Here \verb+V+ is the exact number of variables in the clause,
      and \verb+var1 ... varV+ are the variables, represented
      by non-negative integers. Each variable must be less than
      \verb+nrvars+. 
      
      \verb+S+ is the exact number of substlets in the clause.
      Each substlet is represented by a sequence of non-negative
      integers of length \verb+V+, that specifies the values
      assigned to the variables, in the same order as the variables.
      Each value must be less than \verb+nrconsts+.
      There must be exactly \verb+nrclauses+ clauses. 
   \item
      Blockings are represented in the same way as clauses.
      Although in Definition~\ref{Def_gcsp}, blockings are single substlets,
      it is convenient to merge blockings with identical domain 
      into clauses, so that they can be represented more 
      compactly.
   
      There must be exactly \verb+nrblockings+ (merged) blockings.
      Note that it is not obligatory to merge blockings with
      same domain. 
   \item
      Everything after the blockings is ignored, so there is room
      for more comments.
   \item
      Solutions are presented in the format
      \begin{verbatim}
         A V1 C1 ... VA CA \end{verbatim}
      Here \verb+A+ is the number of assignments in the substitution,
      and each \verb+Vi+ $ \Rightarrow $ \verb+Ci+ is an assignment.
      The assigments can be listed in arbitrary order. 
   \end{itemize}
\end{defi}

\begin{exa}
   We represent the GCSP of Example~\ref{Example_SAT1_SAT2}.
   There are three variables $ X,Y,Z, $ which we will
   represent by $ 0,1,2. $ This means that $ 3 $ is an upperbound.
   There are two constants $ 0,1, $ so that $ 2 $ is an upperbound.
   This is a representation:  
   \begin{verbatim}

c we did not merge the blockings

p gcsp 3 2 2 2  

2  0 1   2  0 1  1 0 
2  1 2   3  0 0  0 1  1 0
      
2  0 2   1  0 0 
2  0 2   1  1 1  
   \end{verbatim}

   Since the two blockings in Example~\ref{Example_SAT1_SAT2} have the same
   domain, the GCSP can be alternatively represented as follows: 
   \begin{verbatim}    
c  this time we merged the two blockings
c  there is no difference in meaning 

P GCSP 3 2 2 1  

2  0 1   2  0 1  1 0   
2  1 2   3  0 0  0 1  1 0
      
2  0 2   2  0 0  1 1 
   \end{verbatim}

   \noindent
   The solution $ \Theta_2 $ can be output as
   \verb+ 3  0 1  1 0  2 0+. 
   Since the order is arbitrary, it can also be
   output as \verb+ 3  1 0  2 0  0 1+. 
\end{exa}

\section{Experiments}
\label{Sect_experiments}

\noindent
We present measurements on two benchmark sets. Both sets were
obtained by running {\bf Geo} on a few input problems, and collecting
hard matching instances. The first set consists of problems 
that took more than an hour to solve with a naive matching algorithm.
This set was used in Figures~\ref{Fig_unary_sat} and 
\ref{Fig_tyebye_z_maslem}. 

\begin{figure}
   \caption{Comparing Direct Matching with SAT Translations} 
   \label{Fig_unary_sat}
\[
   \begin{array}{|l||c|c||c|c|}
      \hline
      {\bf Problem} & {\rm Algo}~\ref{Algo_unary} & {\rm Algo}~\ref{Algo_unary} ({\bf flat})& 
                      {\rm Def}~\ref{Def_trans_SAT1} & {\rm Def}~\ref{Def_trans_SAT2} \\
      \hline
      {\bf mod01}       & 271(42606)  & 0.15(2898)   &  0.17 (8248)     & 0.08(3220)   \\
      {\bf mod02}       & 138(28830)  & 0.11(2064)   &  0.19 (8248)     & 0.12(5982)   \\
      {\bf mod03}       & 80(21822)   & 0.095(1632)  &  0.6 (20344)     & 0.17(5288)   \\
      {\bf mod04}       & 32(14290)   & 0.062(1440)  &  0.05 (2955)     & 0.028(1744)  \\ 
      {\bf mod05}       & 21(11640)   & 0.049(1210)  &  0.06 (5007)     & 0.036(1747)  \\
      {\bf mod06}^{*}   & 340(23193)  & 1.42(9862)   &  0.06 (1637)     & 0.044(1637)  \\
      {\bf mod07}^{*}   & 703(31347)  & 2.05(12495)  &  0.14 (5032)     & 0.098(2955)  \\
      {\bf mod08}^{*}   & 1593(42709) & 3.06(17947)  &  0.87 (20658)    & 0.15(5032)   \\
      \hline
      {\bf mod22}^{*}   & 133(25620)  & 2.8(18542)   &  10.14 (113548)  & 100(110445)  \\
      {\bf mod23}^{*}   & 52(17533)   & 2.09(15114)  &  75.64 (230822)  & 8.9(58213)   \\
      {\bf subst15}^{*} & 0.38(4)     & 0.05(4)      &  0.07 (350)      & 0.060(98)    \\ 
      {\bf syn02}^{*}   & 0.0017(0)   & 0.00014(0)   &  0.14 (0)        & 0.0035(0)    \\
      {\bf syn11}^{*}   & 0.0006(4)   & 0.00017(2)   &  0.024 (0)       & 0.0098(0)    \\
      {\bf syn12}^{*}   & 0.0022(1)   & 0.00025(0)   &  0.13 (0)        & 0.012(0)     \\
      {\bf syn14}^{*}   & 3.81(1461)  & 0.31(1083)   &  0.04 (0)        & 0.035(1634)  \\
      \hline
   \end{array}
\]
\end{figure}

\noindent
Entries in Figure~\ref{Fig_unary_sat} have form $ t(\lambda), $ where $ t $ is the time used in
seconds, and $ \lambda $ the number of lemmas generated. 
For the 3d and 4th column, the times are the CPU-times reported by MiniSat 
(Version~2.0~beta) (\cite{Minisat2004}).
Since MiniSat is not integrated into 
{\bf Geo}, it is difficult to measure the total time (conversion+solving).
For hard problems, the conversion times are probably negligible, but for
trivial problems, they may be significant (in the same order
of magnitude as the solving times) because translation is quadratic. 
Due to the way the benchmarks were collected, they contain
no trivial problems.
It can be seen from Figure~\ref{Fig_unary_sat} that translation to 
propositional SAT is comparable to Algorithm~\ref{Algo_unary} with
flat lemmas. We will dicuss this more in the context of 
Figure~\ref{Fig_unary_sat2}.  
We were not sure how to determine the number of lemmas generated
during a run of MiniSat, due to the fact that it performs restarts.
Currently, we simply added the numbers reported by the different 
restarts. Since MiniSat probably reuses lemmas between different 
restarts, this means that the indicated numbers are likely too high. 
It can be seen that nearly always, 
Definition~\ref{Def_trans_SAT2} performs better than 
Definition~\ref{Def_trans_SAT1}. 

\begin{figure}
   \caption{Results for Matching Using Local Consistency} 
   \label{Fig_tyebye_z_maslem}
\[
   \begin{array}{|l||c|c|c||c|}
      \hline
      {\bf Problem} & S=1 & S=2 & 
                      S=3 & S=1 \ ({\bf flat}) \\
      \hline
      {\bf mod01}       & 257(104268)  & 256(104268)   & 283(104268)  & 1712(670938)  \\
      {\bf mod02}       & 334(90012)   & 359(90012)    & 330(90012)   & 1397(585402)  \\
      {\bf mod03}       & 148(75288)   & 142(75288)    & 150(75324)   & 639(464658)   \\
      {\bf mod04}       & 42(35985)    & 41(35985)     & 46(35985)    & 182(194905)   \\
      {\bf mod05}       & 39(35110)    & 40(35110)     & 43(35110)    & 133(190530)   \\
      {\bf mod06}^{*}   & 577(27689)   & 602(27689)    & 443(32467)   & 593(125993)   \\
      {\bf mod07}^{*}   & 946(32338)   & 962(32338)    & 669(35410)   & 888(155391)   \\
      {\bf mod08}^{*}   & 1580(42193)  & 1719(42193)   & 1091(39057)  & 1258(175607)  \\
      \hline 
      {\bf mod22}^{*}   & 379(30758)  &  355(30758)    & 243(26814)   & 62(53035)     \\
      {\bf mod23}^{*}   & 92(18228)   &  91(18228)     & 69(14662)    & 26(26384)     \\
      {\bf subst15}^{*} & 0.42(43)    &  0.44(43)      & 0.66(43)     & 0.42(43)      \\
      {\bf syn02}^{*}   & 0.013(0)    &  0.01(0)       & 0.015(0)     & 0.012(0)      \\
      {\bf syn11}^{*}   & 0.006(2)    &  0.019(26)     & 0.063(14)    & 0.006(2)      \\
      {\bf syn12}^{*}   & 0.01(0)     &  0.015(0)      & 0.031(0)     & 0.008(0)      \\
      {\bf syn14}^{*}   & 0.098(132)  &  0.12(126)     & 0.33(92)     & 0.0023(195)   \\
      \hline
   \end{array}
\]
\end{figure}

Figure~\ref{Fig_tyebye_z_maslem} shows results for the refining algorithm
of \cite{deNivelle2016a}, discussed in Section~\ref{Sect_IJCAR2016}.
It can be seen that using $ S > 1 $ is hardly worth the effort, and
that the refining algorithm performs somewhat worse than 
Algorithm~\ref{Algo_unary} with learning of unrestricted lemmas. 
Since flattening of lemmas improves the
performance of Algorithm~\ref{Algo_unary} dramatically,
we tried the same with the refining algorithm. Unfortunately,
the last column
of Figure~\ref{Fig_tyebye_z_maslem} shows that flattening has 
a big, negative impact on the refining algorithm. 
That means that the refining algorithm can be ruled out as a candidate
for being optimal. 

\begin{figure}
   \caption{Algorithm~\ref{Algo_unary} (flat lemmas) 
            against {\rm Def}~\ref{Def_trans_SAT2}} 
   \label{Fig_unary_sat2}
\[
   \begin{array}{||c|c||c|c|}
      \hline
      {\rm Algo}~\ref{Algo_unary} ({\bf flat}) & {\rm Def}~\ref{Def_trans_SAT2} &
      {\rm Algo}~\ref{Algo_unary} ({\bf flat}) & {\rm Def}~\ref{Def_trans_SAT2} \\
      \hline
      0.66(10565)  & 5.6(8169)     &  0.16 (2875)     & 7.29(23457)   \\
      0.20(4081)   & 4.02(13441)   &  0.13 (3231)     & 0.95(6551)    \\
      0.18(4521)   & 1.07(10670)   &  0.00013(5)      & 0.0(0)        \\
      0.00019(4)   & 0.0029(0)     &  0.037(1182)     & 0.043(2058)   \\
      0.057(317)   & 0.78(7478)    &  0.058(683)      & 1.89(16973)   \\
      \hline
      0.147(558)   & 2.88(16971)   &  0.053 (1795)    & 0.12(2810)    \\
      0.21(1039)   & 3.08(17065)   &  0.026 (910)     & 0.072(1749)   \\
      0.033(1006)  & 0.032(1318)   &  0.018 (402)     & 0.022(1299)   \\
      0.23(6308)   & 0.073(3210)   &  0.023 (310)     & 0.101(3216)   \\
      0.026(990)   & 0.027(1318)   &  0.0655 (1208)   & 0.11(3215)    \\
      \hline
                   &               &  0.036 (553)     & 0.069(2070)   \\
      \hline
   \end{array}
\]
\end{figure}

\noindent
Figure~\ref{Fig_unary_sat2} presents another benchmark test that 
was obtained by running {\bf Geo} on several input problems using 
Algorithm~\ref{Algo_unary}
with flat lemmas. We started by setting
a short initial time $ t = 10^{-4}. $
Whenever a matching took more than $ t $ seconds to solve, we 
added it to the benchmark set, and doubled $ t. $ 
From several runs, we kept the last three instances generated
in this way. 
Figure~\ref{Fig_unary_sat2} shows that most of the problems obtained 
in this way, are also hard for MiniSat.
We believe Figure~\ref{Fig_unary_sat2} shows the potential of direct 
matching algorithms,
but there are several caveats: 
Runs of {\bf geo} do not generate really hard matching instances,
all instances are solved within seconds, most much faster.
The way of collecting problems that are hard for 
Algorithm~\ref{Algo_unary} puts
it at a disadvantage. For example, there may be problems that
are hard for MiniSat, which will not enter the benchmark set.
On the other hand, MiniSat is not state of the art anymore,
and modern SAT solvers probably will perform better.
We carefully conclude that there is a chance that in the long
term, at least for some applications, our approach of directly
implementing matching, may be the optimal approach.

\section{Finding Optimal Matchings}
\label{Sect_optimal}

\noindent
In this section we address the problem of finding optimal matchings.
For the effectiveness of geometric resolution, it is important
that a minimal matching is returned, in case more than one exists.
A minimal matching is a matching that uses the smallest possible
set of assumptions. In terminology of DPLL, assumptions represent
decision levels. The assumptions contributing to a conflict
represent choice options,
which will be replaced by other options during backtracking.
In addition to being as few as possible,
assumptions at a lower decision level should always be preferred over 
assumptions at
a higher decision level. The reason for this is the fact that
in other branches of the search tree, there is a risk that
more assumptions will be used, and when assumptions are at a lower level, 
there is less room for this.  

\begin{defi}
   Let $ I $ be an interpretation. A weight function $ \alpha $
   is a function that assigns finite subsets of natural numbers to
   the atoms of $ I. $ 

   Let $ A $ be a geometric literal. Let $ \Theta $ be a substitution
   such that $ A \Theta $ is in conflict with $ I. $ 
   Referring to definition~\ref{Def_conflict_truth}, we
   define $ \alpha( \ p_{\lambda}( x_1, \ldots, x_n) \Theta, I ) = 
              \alpha( \ p_{\mu}( x_1 \Theta, \ldots, x_n \Theta ) \ ), \ \ 
         \alpha( \ (x_1 \approx x_2) \Theta, I ) = \{ \}, $ and
   $ \alpha( \ ( \#_{\bf f} x ) \Theta, I ) = 
             \alpha( \ ( \#_{\bf t} x \Theta ) \ ). $ 
\end{defi}

\begin{defi}
   \label{Def_matching_weight}
   Let $ I $ and 
   $ \phi = \ A_1, \ldots, A_p \ | \ B_1, \ldots, B_q $ together form
   an instance of the matching problem (Definition~\ref{Def_matching}).
   Assume that $ \Theta $ is a solution.  
   The \emph{weight of} $ \Theta, $ for which we write 
   $ \alpha( I, \phi, \Theta ), $ is defined as 
   \[ \bigcup \left \{
      \begin{array}{l}
         \{ \ \alpha( A_i \Theta, I ) \ | \ 1 \leq i \leq p \} \\
         \{ \ \alpha( C, I ) \ | \ 
             1 \leq j \leq q, \ \ C \in E( B_j, \Theta ), \mbox{ and }
             C \mbox{ conflicts } I \ \} \\
      \end{array}
      \right .
   \] 
   Solving optimal matching means: 
   First establish if $ (I, \phi ) $ has a solution. 
   If it has, then find a solution $ \Theta $ for which
   $ \alpha(I, \phi, \Theta ) $ is multiset minimal.
\end{defi}
One could try to impose further selection criteria that are harder
to explain and whose advantage is less evident.

Solving the minimal matching problem is non-trivial, because the number
of possible solutions can be very large. 
The straightforward solution is to use some efficient
algorithm (e.g. the one in this paper)
that enumerates all solutions, and keeps the best solution.
Unfortunately, this approach is completely impractical because 
some instances have a very high number of solutions.
One frequently encounters 
instances with $ > 10^{9} $ solutions.

In order to find a minimal solution without enumerating
all solutions, one can use any algorithm that stops on the first
solution in the following way:  
The first call is used to find out whether a solution 
exists. If not, then we are done.
Otherwise, the algorithm is called again with its input
restricted in such a way 
that it has to find a better solution than the
previous.
One can continue doing this, until all possibilities to improve
the solution have been exhausted. 
It can be shown that the number of calls needed to obtain an optimal
solution is linear in the size of the assumption set of solution. In this way,
it can be avoided that all solutions have to be enumerated.

\begin{defi}
   \label{Def_restricted_translation}
   Let $ I $ be an interpretation that is equipped 
   with a weight function $ \alpha. $ 
   Let $ \phi = \ A_1, \ldots, A_p \ | \ B_1, \ldots, B_q $ be a
   geometric formula. 
   Let $ \alpha $ be a fixed set of natural numbers. 
   We define the $ \alpha $-restricted translation 
   $ ( \Sigma^{+}, \Sigma^{-} ) $ of 
   $ ( I, \phi ) $ as follows: 

   \begin{itemize}
   \item
      For every $ A_i, $ let $ \overline{v}_i $ be the variables
      of $ A_i. $ 
      Then $ \Sigma^{+} $ contains the clause  
      \[ \{ \overline{v}_i / \overline{v}_i \Theta \ | \ 
                A_i \Theta \mbox{ is in conflict with } I \mbox{ and } 
                \alpha( A_i \Theta, I ) \subseteq \alpha \}. \]
   \item
      For each $ B_j, $ let $ \overline{w}_j $ denote the variables of
      $ B_j. $ For every $ \Theta $ that makes $ B_j \Theta $ true
      in $ I, $ \ \ 
      $ \Sigma^{-} $ contains the substlet
      $ \overline{w}_j / ( \overline{w}_j \Theta ). $ 
      In addition, if there exists a $ C \in E( B_j, \Theta ) $ 
      that is in conflict with $ I $ and for which 
      $ \alpha( C, \Theta ) \not \subseteq \alpha, $ then
      $ \Sigma^{-} $ contains the substlet 
      $ \overline{w}_j / ( \overline{w}_j \Theta ). $ 
   \end{itemize}

\end{defi}
   
\noindent
The $ \alpha $-restricted translation ensures that only conflicts
involving atoms $ C $ with $ \alpha(C) \subseteq \alpha $ are considered,
and (independently of $ \alpha $), that no $ B_j $ is made true. 
The translation of Definition~\ref{Def_trans_gcsp} can be viewed as a special
case of $ \alpha $-restricted translation with $ \alpha = \mathbb{N}. $

\begin{thm}
   \label{Thm_alpha_restricted}
   Let $ ( \Sigma^{+}, \Sigma^{-} ) $ be obtained by $ \alpha $-restricted
   translation of $ (I, \phi). $ 
   For every substitution $ \Theta, $ \ \ 
   $ \Theta $ is a solution of $ ( \Sigma^{+}, \Sigma^{-} ) $ 
   iff 
   $ \Theta $ is a solution of $ ( I, \phi ), $ and it 
   has $ \alpha( I, \phi, \Theta ) \subseteq \alpha. $ 
\end{thm}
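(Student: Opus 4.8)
The plan is to prove the biconditional directly, in both directions, by unfolding the definitions of a GCSP solution (Definition~\ref{Def_gcsp}), of a matching (Definition~\ref{Def_matching}), and of matching weight (Definition~\ref{Def_matching_weight}), and matching them against the clause and blocking families produced by the $\alpha$-restricted translation (Definition~\ref{Def_restricted_translation}). The three conditions on the right-hand side --- all $A_i\Theta$ conflict $I$, no $B_j\Theta$ is true in $I$, and $\alpha(I,\phi,\Theta)\subseteq\alpha$ --- line up respectively with satisfaction of the $A_i$-clauses, with the first family of blockings, and with the second (``in addition'') family of blockings together with the weight bound embedded in the clauses. The one recurring technical fact I would establish first is that a substlet $\overline{v}/\overline{v}\rho$ is made true by $\Theta$ exactly when $\overline{v}\Theta=\overline{v}\rho$ (Definition~\ref{Def_substlet}), and that conflict status, atom weight, and the extension set $E(B_j,\cdot)$ all depend only on the images of the variables of the atom in question; this lets me freely replace the bound substitution $\rho$ of the translation by the $\Theta$ under consideration whenever they agree on the relevant variables.

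For the forward direction I would assume $\Theta$ solves $(\Sigma^{+},\Sigma^{-})$ and verify the three conditions in turn. Truth of each $A_i$-clause gives a substlet $\overline{v}_i/\overline{v}_i\rho$ true under $\Theta$; the recurring fact yields $A_i\Theta=A_i\rho$, and the clause's membership condition then delivers both that $A_i\Theta$ conflicts $I$ and that $\alpha(A_i\Theta,I)\subseteq\alpha$. For the conclusions I argue by contradiction: if $B_j\Theta$ were true in $I$, the first blocking family (taken at the bound substitution equal to $\Theta$) would contain $\overline{w}_j/\overline{w}_j\Theta$, which $\Theta$ makes true; and if some $C\in E(B_j,\Theta)$ conflicted $I$ with $\alpha(C,I)\not\subseteq\alpha$, the second blocking family would likewise contain $\overline{w}_j/\overline{w}_j\Theta$. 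Either case contradicts solutionhood, so no $B_j\Theta$ is true (hence $\Theta$ is a matching) and every conflicting $C\in E(B_j,\Theta)$ has weight inside $\alpha$; combined with the $A_i$ bounds this gives $\alpha(I,\phi,\Theta)\subseteq\alpha$ by the two-part union of Definition~\ref{Def_matching_weight}.

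For the backward direction I would assume $\Theta$ is a matching with $\alpha(I,\phi,\Theta)\subseteq\alpha$ and show it solves the GCSP. Each $A_i\Theta$ conflicts $I$ and contributes $\alpha(A_i\Theta,I)$ to the first part of the weight, so $\alpha(A_i\Theta,I)\subseteq\alpha$; hence $\overline{v}_i/\overline{v}_i\Theta$ lies in the $A_i$-clause and $\Theta$ makes the clause true. For the blockings I again argue by contradiction. A first-family blocking $\overline{w}_j/\overline{w}_j\rho$ made true by $\Theta$ would force $\overline{w}_j\Theta=\overline{w}_j\rho$ and thus $B_j\Theta$ true in $I$, impossible for a matching. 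A second-family blocking $\overline{w}_j/\overline{w}_j\rho$ with witness $C\in E(B_j,\rho)$ made true by $\Theta$ would give $\overline{w}_j\Theta=\overline{w}_j\rho$, hence $E(B_j,\Theta)=E(B_j,\rho)$ and $C\in E(B_j,\Theta)$, so $\alpha(C,I)$ is a member of the second part of $\alpha(I,\phi,\Theta)$ and therefore $\subseteq\alpha$, contradicting $\alpha(C,I)\not\subseteq\alpha$. So $\Theta$ makes no blocking true and is a solution.

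The main obstacle I expect is not conceptual but a matter of careful bookkeeping at the interface between the extension set and the blockings: I must verify that $E(B_j,\cdot)$ really is a function of $\overline{w}_j\Theta$ alone --- the case worth checking is the existential atom, where $E$ ranges over the instantiations $\{y:=c\}$ --- and that the second blocking family is invoked only after $B_j\Theta$ has been shown not true in $I$, so that $E(B_j,\Theta)$ is defined. I would also flag the apparent typo in Definition~\ref{Def_restricted_translation}, where the weight $\alpha(C,\Theta)$ should read $\alpha(C,I)$ as in Definition~\ref{Def_matching_weight}; the argument uses the latter reading throughout.
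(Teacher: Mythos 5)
Your proof is correct, and it is essentially the paper's own argument: the paper states Theorem~\ref{Thm_alpha_restricted} without proof, treating it as immediate from unfolding Definitions~\ref{Def_gcsp}, \ref{Def_matching_weight} and~\ref{Def_restricted_translation}, which is exactly the definitional case analysis you carry out. The two subtleties you flag are also resolved the right way: $\alpha(C,\Theta)$ in Definition~\ref{Def_restricted_translation} should indeed read $\alpha(C,I)$, and the ``in addition'' blocking family must be read as ranging over the substitutions for which $B_j\Theta$ is \emph{not} true in $I$, since only then is $E(B_j,\Theta)$ defined.
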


\noindent
Using $ \alpha $-restricted translation, we can define 
the {\bf optimal} matching algorithm:

\begin{algo}
   \label{Alg_find_optimal_match}
   Let $ {\bf solve}( \Sigma^{+}, \Sigma^{-} ) $ be a function that returns
   some solution of $ ( \Sigma^{+}, \Sigma^{-}) $ if it has a solution, 
   and $ \bot $ otherwise.

   We define the algorithm
   $ {\bf optimal}( \ I, \phi \ ) $ 
   that returns an optimal solution of $ (I,\phi) $ if one exists and 
   $ \bot $ otherwise. 
   \begin{enumerate}
   \item
      Let $ ( \Sigma^{+}, \Sigma^{-} ) $ be the GCSP obtained by
      the translation of Definition~\ref{Def_trans_gcsp}. 
      If $ \Sigma^{+} $ contains an empty clause, then return $ \bot. $ 
      If $ \Sigma^{-} $ contains a propositional blocking, then return
      $ \bot. $ 
      Otherwise, remove unit blockings from $ ( \Sigma^{+}, \Sigma^{-} ). $ 
      If this results in $ \Sigma^{+} $ containing an empty clause, then 
      return $ \bot. $ 
   \item
      Let $ \Theta = {\bf solve}( \Sigma^{+}, \Sigma^{-} ). $ 
      If $ \Theta = \bot, $ then {\bf return} $ \bot. $ 

   \item
      Let $ \alpha = \alpha(I, \phi, \Theta), $ and
      let $ k := \sup( \alpha ). $ 
   
   \item
      As long as $ k \not = 0, $ do the following:
      \begin{itemize} 
      \item
         Set $ k = k - 1. $ 
         If $ k \in \alpha, $ then do 
         \begin{itemize}
         \item
            Let $ \alpha' = 
                    ( \alpha \backslash \{ k \} ) \cup 
                \{ 0, 1, 2, \ldots, k - 1 \}. $ 
         \item
            Let $ ( \Sigma^{+}, 
                    \Sigma^{-} ) $ be the
            $ \alpha' $-restricted translation 
            of $ (I, \phi). $ 
         \item
            If $ \Sigma^{+} $ contains an empty clause
            or $ \Sigma^{-} $ contains a propositional blocking, 
            then skip the rest of the loop.
            Otherwise, remove the unit blockings from 
            $ ( \Sigma^{+}, \Sigma^{-} ). $ 
            If this results in $ \Sigma^{+} $ containing the empty clause, 
            then skip the rest of the loop.
         \item
            Let $ \Theta' = {\bf solve}( \Sigma^{+}, \Sigma^{-} ). $ 
            If $ \Theta' \not = \bot, $ then set $ \Theta = \Theta' $
            and $ \alpha = \alpha( I, \phi, \Theta ). $ 
         \end{itemize} 
      \end{itemize}    
   \item
      Now $ \Theta $ is an optimal solution,
      so we can {\bf return} $ \Theta. $ 
   \end{enumerate}
\end{algo}
Algorithm $ {\bf optimal} $ first solves $ (I,\phi) $ without
restriction. If this results in a solution $ \Theta, $ it checks for each 
$ k \in \alpha( I, \phi, \Theta ) $ if $ k $ 
can be removed. The invariant of the main loop is:
There exists no $ k' \geq k $ that occurs in $ \alpha( I, \phi, \Theta ), $ 
and no $ \Theta' $ that is a solution of $ (I, \phi) $ with
$ k' \not \in \alpha( I, \phi, \Theta' ). $ 
In addition, the invariant $ \alpha = \alpha( I,\phi,\Theta ) $ 
is maintained.

\begin{exa}
   Assume that in example~\ref{Ex_matchings}, the
   atoms have weights as follows:
   \[ \begin{array}{ll}
         \alpha( \ P_{\bf t}( c_0, c_0 ) \ ) = \{ 1 \}, \ \
         \alpha( \ P_{\bf e}( c_0, c_1 ) \ ) = \{ 2 \}, \ \
         \alpha( \ P_{\bf t}( c_1, c_1 ) \ ) = \{ 3 \}, \\
         \alpha( \ P_{\bf e}( c_1, c_2 ) \ ) = \{ 4 \}, \ \
         \alpha( \ Q_{\bf t}( c_2, c_0 ) \ ) = \{ 5 \}. \\
      \end{array}
   \]
   We have $ \alpha( I, \phi_1, \Theta_1 ) = \{ 1 \}, $ \ \ 
   $ \alpha( I, \phi_1, \Theta_2 ) = \{ 1,2 \}, $ 
   and $ \alpha( I, \phi_1, \Theta_3 ) = \{ 2,3 \}. $ 
   If $ \Theta_3 $ is the first solution generated,
   $ {\bf solve} $ will construct the 
   $ \{ 1,2 \} $-restricted translation of $ (I,\phi_1), $ 
   which equals 
   \[ \begin{array}{l}
         (X,Y) \ / \ ( c_0,c_0 ) \ | \ ( c_0,c_1 ) \\
         (Y,Z) \ / \ ( c_0,c_0 ) \ | \ ( c_0,c_1 ) \\
         \hline
         (X,Z) \ / \ ( c_0,c_2 ) \\
      \end{array}
   \]
   If the next solution found is $ \Theta_2, $ then
   $ {\bf solve} $ will construct the $ \{ 1 \} $-restricted translation
   \[ \begin{array}{l}
         (X,Y) \ / \ ( c_0,c_0 ) \\
         (Y,Z) \ / \ ( c_0,c_0 ) \\
         \hline
         (X,Z) \ / \ ( c_0,c_2 ) \\
      \end{array}
   \]
   whose only solution is $ \Theta_1. $ 
\end{exa}

\section{Filtering by Local Consistency Checking} 
\label{Sect_local_consistency_checking}

\noindent
Filtering is any procedure that simplifies or possibly rejects a
GCSP before the main algorithm is called.
In {\bf Geo}, we have used filtering based on local consistency
checking. In earlier versions, this was effective because
very often, filtering rejects a GCSP without calling the main algorithm.
Since the algorithms that we present in this paper, are much
more efficient, this is not certain anymore. 
We still present the local consistency checking procedure,
because it is easy to implement using refinement stacks,
and it may be still an effective tool for filtering out
easy instances. 

Local consistency checking 
(see \cite{Dechter2003,MaloSebag2004,SchefHerbWys94})
is a pre-check that comes in many variations. 
Local consistency checking is the following procedure:
For every clause $ c = \{ s_1, \ldots, s_n \} \in \Sigma^{+}, $ check,
for all sets of clauses $ C $ with size $ S \geq 1, $
if $ \{ s_i \} \cup C $ has a solution. If not, then
remove $ s_i $ from $ c. $
Keep on doing this, until no further changes are possible or a clause
has become empty. Local consistency checking rejects a large
percentage of GCSP instances a priori, and usually decreases
the size of the clauses involved by a factor two or three.

In \cite{Dechter2003} (Chapter~3), local consistency checking is
defined using subsets of variables (instead
of clauses). Using subsets of two variables is called
\emph{arc consistency checking}, while considering subsets of 
three variables is called \emph{path consistency checking}.
In general, using bigger subsets is a more effective precheck,
but also more costly because it gets closer to the original problem. 

As discussed in Section~\ref{Sect_IJCAR2016}, we had assumed in 
\cite{deNivelle2016a} 
that filtering is so effective, that one can base the complete search
algorithm on it. Although this is possible in theory, the resulting
algorithm turned out not competitive.

Since the local consistency checks the substlets in a single
clause $ c $ against sets of clauses $ C \subseteq \Sigma^{+}, $
we define the size $ S $ of a local consistency check as $ S = \| C \|. $ 
When performing a local consistency check up to size $ S, $ 
one has to generate subsets up to size $ S+1, $ and generate
their solutions. If $ \| \Sigma^{+} \| = n, $ the total
number of such subsets equals 
$ \left ( \begin{array}{l} n \\ S+1 \\ \end{array} \right ), $ which 
grows very quickly for realistic $ n. $ 
The problem can be decreased by not generating all subsets,
but only generate subsets whose clauses 
share variables, or have variables that co-occur in a blocking.
\begin{defi}
   \label{Def_related}
   Let $ c,c' $ be clauses. We write $ c \sim c' $ if either
   $ c $ and $ c' $ share a variable, or there exist connected
   (Definition~\ref{Def_connected_variable}) variables $ v $ and $ v', $
   s.t. $ v $ occurs in $ c $ and $ v' $ occurs in~$c'. $
\end{defi}

\noindent
It is sufficient to generate subsets that are connected,
because consideration of subsets that are not connected will not lead to 
the removal of more substlets. We always assume that solutions
are non-redundant, i.e. do not contain irrelevant assignments.

\begin{lem}
   \label{Lemma_disconnected}
   Let $ ( \Sigma^{+}, \Sigma^{-} ) $ be a GCSP.
   Let $ C \subseteq \Sigma^{+}. $ If $ C $ can be written as
   $ C_1 \cup C_2, $ s.t. there exist no $ c_1 \in C_1 $ 
   and no $ c_2 \in C_2 $ with $ c_1 \sim c_2, $ then 
   for every two substitutions $ \Theta_1, \Theta_2, $ s.t. 
   $ \Theta_1 $ is a solution of $ ( C_1, \Sigma^{-} ) $ and
   $ \Theta_2 $ is a solution of $ ( C_2, \Sigma^{-} ), $ 
   \ \ $ \Theta_1 \cup \Theta_2 $ is a solution of 
   $ ( C_1 \cup C_2, \Sigma^{-} ). $
\end{lem}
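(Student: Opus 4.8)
The plan is to verify, for $\Theta := \Theta_1 \cup \Theta_2$, the three requirements that Definition~\ref{Def_gcsp} imposes on a solution of $(C_1 \cup C_2, \Sigma^{-})$: that $\Theta$ is a well-defined substitution, that it makes every clause in $C_1 \cup C_2$ true, and that it makes no blocking $\sigma \in \Sigma^{-}$ true. The first two are essentially bookkeeping; the real content lies in the blocking condition, which is the only place where the disconnectedness hypothesis is used.

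First I would pin down the domains. By the non-redundancy assumption stated just above the lemma, a solution contains no irrelevant assignments, and since blockings are purely negative constraints, the domain of $\Theta_1$ consists exactly of the variables occurring in $C_1$, and likewise $\Theta_2$ assigns exactly the variables of $C_2$. If a clause of $C_1$ and a clause of $C_2$ shared a variable, they would satisfy $c_1 \sim c_2$ by the first case of Definition~\ref{Def_related}; the hypothesis therefore forces $C_1$ and $C_2$ to have no common variable, so the domains of $\Theta_1$ and $\Theta_2$ are disjoint. Hence $\Theta$ is a genuine substitution that restricts to $\Theta_i$ on the variables of $C_i$. For the positive clauses I would then use monotonicity of substlet truth: if $\Theta_1 \models c$ for some $c \in C_1$, then some substlet of $c$ has all its assignments present in $\Theta_1$, hence a fortiori in $\Theta$, so $\Theta \models c$; the symmetric argument covers $c \in C_2$.

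The main obstacle is the blocking condition, and I would argue it by contradiction. Suppose $\Theta$ makes some $\sigma = \{ v_1/x_1, \ldots, v_n/x_n \} \in \Sigma^{-}$ true, so every assignment $v_i := x_i$ lies in $\Theta_1 \cup \Theta_2$. Each $v_i$ is then assigned by $\Theta_1$ or by $\Theta_2$, i.e.\ occurs in $C_1$ or in $C_2$. If some variable of $\sigma$ came from $C_1$ while another came from $C_2$, these two variables would be connected, since they co-occur in the blocking $\sigma$ (Definition~\ref{Def_connected_variable}); by the second case of Definition~\ref{Def_related} this yields $c_1 \sim c_2$ for a clause $c_1 \in C_1$ and a clause $c_2 \in C_2$, contradicting the hypothesis. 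Therefore all variables of $\sigma$ are assigned by a single $\Theta_i$, and because the domains are disjoint, $\Theta$ agrees with $\Theta_i$ on exactly those variables. Consequently $\Theta_i$ itself already makes $\sigma$ true, contradicting the assumption that $\Theta_i$ is a solution of $(C_i, \Sigma^{-})$. This refutation establishes the blocking condition and completes the verification that $\Theta_1 \cup \Theta_2$ solves $(C_1 \cup C_2, \Sigma^{-})$.

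I expect the subtle step to be the appeal to non-redundancy: it is what confines each domain to the variables of its own half, which is simultaneously needed for well-definedness of the union and for translating "a variable of $\sigma$ is assigned by $\Theta_1$ rather than $\Theta_2$" into "that variable occurs in $C_1$ rather than $C_2$." Everything else — monotonicity of truth and the case split on $\sigma$'s variables — is routine once the domains are understood.
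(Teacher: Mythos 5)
Your proof is correct: non-redundancy pins each $\Theta_i$'s domain to the variables of $C_i$, the absence of shared variables makes the union well-defined, truth of clauses is monotone under extension, and a blocking made true by the union must either straddle both halves (forcing $c_1 \sim c_2$ via Definition~\ref{Def_connected_variable}) or already be made true by one $\Theta_i$ alone --- exactly the two contradictions you derive. The paper in fact states Lemma~\ref{Lemma_disconnected} without proof, but your argument is precisely the one it spells out for the more general Lemma~\ref{Lemma_pivot} (non-redundancy confining domains, merging, and the connectedness contradiction for blockings), so you have reconstructed the paper's intended reasoning, including the essential role of the non-redundancy assumption.
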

Lemma~\ref{Lemma_disconnected} guarantees that it is not needed
to attempt to remove substlets from clauses in $ C_1 \cup C_2, $ 
after $ C_1 $ and $ C_2 $ have been checked. 
If some substlet $ s $ in $ C_1 $ occur some solution of $ C_1, $ and
$ C_2 $ has a solution, then $ s $ will occur in the combined solution.

We will now show that instead of ignoring disconnected subsets,
one can also ignore subsets that are connected only through a single
clause:

\begin{lem}
   \label{Lemma_pivot} 
   Let $ ( \Sigma^{+}, \Sigma^{-} ) $ be a GCSP. 
   Assume that $ C_1, C_2 \subseteq \Sigma^{+}, $ and 
   $ c \in \Sigma^{+}. $  
   Assume that for every pair of variables 
   $ v_1 $ occurring in a clause of $ C_1, $ and
   $ v_2 $ occurring in a clause of $ C_2, $ if
   either $ v_1 = v_2 $ or $ v_1 $ and $ v_2 $ are connected,
   then $ v_1, v_2 $ occur in $ c. $ 

   Then the following holds: 
   If $ \Theta_1 $ is a solution of 
   $ ( C_1 \cup \{ c \}, \Sigma^{-} ) $
   and $ \Theta_2 $ is a solution of 
   $ ( C_2 \cup \{ c \}, \Sigma^{-} ), $
   s.t. $ \Theta_1, \Theta_2 $ agree on the variables occurring in $ c, $
   then $ \Theta_1 \cup \Theta_2 $ is a solution 
   of $ ( C_1 \cup C_2 \cup \{ c \}, \Sigma^{-} ). $ 
\end{lem}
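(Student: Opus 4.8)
The plan is to verify directly that $\Theta_1 \cup \Theta_2$ satisfies all three conditions required of a solution of $(C_1 \cup C_2 \cup \{c\}, \Sigma^{-})$: first, that it is a well-defined substitution (no conflicting assignments); second, that every clause in $C_1 \cup C_2 \cup \{c\}$ is made true; and third, that no blocking in $\Sigma^{-}$ is made true. The hypotheses on the shared variables are engineered to make each of these checks go through, and the key point throughout is that $c$ acts as a \emph{pivot}: every variable shared or connected between the two sides must already occur in $c$, and $\Theta_1, \Theta_2$ are assumed to agree on exactly those variables.

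First I would argue that $\Theta_1 \cup \Theta_2$ is a consistent substitution. Suppose a variable $v$ is assigned by both $\Theta_1$ and $\Theta_2$. Since solutions are assumed non-redundant, $v$ must occur in some clause of $C_1 \cup \{c\}$ and in some clause of $C_2 \cup \{c\}$. If $v$ occurs in a clause of $C_1$ and a clause of $C_2$, then taking $v_1 = v_2 = v$ the hypothesis forces $v$ to occur in $c$; in the remaining cases $v$ occurs in $c$ directly. Either way $v$ occurs in $c$, so by the agreement assumption $v\Theta_1 = v\Theta_2$, and no conflict arises.

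Next I would check the clauses. Each $c' \in C_1$ is made true by $\Theta_1$, and since $\Theta_1 \subseteq \Theta_1 \cup \Theta_2$ and making a clause true only requires one substlet to be satisfied, $c'$ remains true under the union; symmetrically for $C_2$ using $\Theta_2$, and $c$ itself is true under $\Theta_1$. For the blockings I would argue that no $\sigma \in \Sigma^{-}$ becomes true: if $\Theta_1 \cup \Theta_2$ made $\sigma$ true, then every assignment $v_i/x_i$ of $\sigma$ is present in the union. Here I expect the \textbf{main obstacle}: I must show that the assignments witnessing $\sigma$ cannot be split across $\Theta_1$ and $\Theta_2$ in a way that evades detection by either solution alone. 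The way through is to invoke connectedness — the variables of a single blocking $\sigma$ are pairwise connected (Definition~\ref{Def_connected_variable}), so if one variable of $\sigma$ is fed by $\Theta_1$ and another by $\Theta_2$, those two connected variables force themselves into $c$ by hypothesis, hence get identical values from the agreement condition, which lets me pull the whole witness back onto one side (say $C_1 \cup \{c\}$) and contradict the fact that $\Theta_1$ is a solution of $(C_1 \cup \{c\}, \Sigma^{-})$.

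In more detail, for a blocking $\sigma$ made true by the union, partition its variables into those assigned by $\Theta_1$ and those assigned by $\Theta_2$ (consistently, by the first step). Because all variables of $\sigma$ are pairwise connected, any variable assigned only by $\Theta_2$ that is connected to a variable in a clause of $C_1$ is forced into $c$, so its value is already determined by $\Theta_1$ through the agreement on $c$. This shows that the restriction of $\Theta_1$ (extended along $c$) already makes $\sigma$ true, contradicting that $\Theta_1$ solves $(C_1 \cup \{c\}, \Sigma^{-})$; the symmetric argument handles the case where the witness leans on $C_2$. Having dispatched all three conditions, I would conclude that $\Theta_1 \cup \Theta_2$ is a solution of $(C_1 \cup C_2 \cup \{c\}, \Sigma^{-})$.
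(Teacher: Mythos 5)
Your proof is correct and takes essentially the same route as the paper's: both establish consistency of $\Theta_1 \cup \Theta_2$ by using non-redundancy and the hypothesis to force shared variables into $c$ (where the agreement condition applies), both get the clauses for free by monotonicity, and both refute a blocking $\sigma$ split across the two solutions by exploiting that the variables of $\sigma$ are pairwise connected, so the hypothesis forces the split witness variables into $c$. The only difference is cosmetic, namely how the final contradiction is phrased: the paper notes that the two witness variables would have to occur in clauses of $C_1 \setminus \{c\}$ and $C_2 \setminus \{c\}$ respectively while being connected, contradicting the hypothesis, whereas you use agreement on $c$ to pull the entire witness back into $\Theta_1$ and contradict the fact that $\Theta_1$ solves $(C_1 \cup \{c\}, \Sigma^{-})$ --- the same argument arranged differently.
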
 
\begin{proof}
   Assume that $ \Theta_1, \Theta_2, C_1,C_2,c $ fulfil the
   conditions of the lemma. 
   By non-redundancy, $ \Theta_1 $ does not contain assignments to
   variables not occurring in $ c $ or $ C_1. $ 
   Similarly, $ \Theta_2 $ does not contain assignments to
   variables not occurring in $ c $ or $ C_2. $  
   If $ \Theta_1, \Theta_2 $ share a variable $ v, $ then
   this variable must occur in $ c, $ 
   which implies that $ v \Theta_1 = v \Theta_2. $ 
   As a consequence, $ \Theta_1 $ and $ \Theta_2 $ can be merged into
   a single substitution $ \Theta = \Theta_1 \cup \Theta_2, $ 
   which has $ \Theta \models C_1 \cup C_2 \cup \{ c \}. $ 
   
   If there would be a blocking $ \sigma \in \Sigma^{-}, $ s.t.
   $ \Theta \models \sigma, $ then we still have
   $ \Theta_1 \not \models \sigma $ and $ \Theta_2 \not \models \sigma. $
   This implies that there are variables $ v_1 $ in 
   $ C_1 \backslash \{ c \} $ and 
   $ v_2 $ occurring in $ C_1 \backslash \{ c \}, $ which 
   occur together in $ \sigma. $ But this 
   contradicts the fact that $ v_1 $ and $ v_2 $ cannot be
   connected. 
\end{proof}

As above, if some substlet $ s \in C_1 $ 
is used in a solution of $ C_1 \cup \{ c \}, $ and $ \{ c \} \cup 
C_2 $ has a solution, then the solutions can be
combined into a single solution that uses $ s. $ 
If some substlet $ s $ of $ c $ occurs in a solution
of $ C_1 \cup \{ c \} $ and in a solution of $ \{ c \} \cup C_2, $ 
then the solutions can be combined into a single solution that
still uses $ s. $ 

This implies that, if one uses a local consistency checker that 
gives preference to small subsets, one can ignore subsets
that do not contain 'cycles'. If there exist $ c_1,c_2,c_3 \in C, $ 
s.t. $ c_1,c_3 $ are not connected, and every path from $ c_1 $ to
$ c_3 $ has to pass through $ c_2, $ then $ C $ can be ignored.
This gives rise to the following definition: 

\begin{defi}
   \label{Def_circle}
   Let $ ( c_1, \ldots, c_{S+1} ) $ with $ S \geq 1 $ be a sequence 
   of clauses. We call $ ( c_1, \ldots, c_{S+1} ) $ \emph{a circle} if 
   for every $ i \ ( i \leq S ), $ we have $ c_i \sim c_{i+1}, $ 
   and in addition we have $ c_{S+1} \sim c_1. $ 

   If $ \overline{C} $ is a refinement stack, we call a sequence
   of indices $ ( i_1, \ldots, i_{S+1} ) $ \emph{a circle} if
   each $ \alpha_{i_j}( \overline{C} ) $ is true, and
   $ ( d_{i_1}, \ldots, d_{i_{S+1}} ) $ is a circle.
\end{defi}

The local consistency checker checks only circles. 
Generation of circles
in $ \Sigma^{+} $ is easier to implement than generation of all connected 
subsets, especially if one wants to avoid generating the same subset in
different ways. In addition, it is more efficient because there are less
circles than connected subsets. 
The discussion above suggests that generating circles
is sufficient to obtain a complete check. We have believed for 
some time that this is true in general, but we will show below that 
it is false. 
 
\begin{algo}
   \label{Algo_local_consistency_checking}
   Let $ S \geq 1 $ be a natural number.
   Let $ \Theta $ be a substitution.
   Let $ \overline{C} $ be a refinement stack. 

   A call to $ {\bf local}( s, \Theta, 
                            ( k_1, \ldots, k_{S+1} ), \overline{C} ) $ 
   constructs a refinement of $ \overline{C} $ by removing the substlets
   that do not occur in any solution of the subset of $ \Sigma^{+} $
   of size $ S + 1. $ 
   
   It returns $ \bot $ if it establishes that $ \Theta $ cannot
   be extended into a solution of $ \overline{C}. $ 
   Initially $ s = k_1 = \cdots = k_{S+1} = 1. $ 
   \begin{description}
   \item[SUBST]
      As long as $ s \leq \| \Theta \|, $ let $ v/x $ be the
      $ s $-th assignment in $ \Theta. $ 
      \begin{enumerate}
      \item
         For every blocking $ \sigma \in \Sigma^{-} $ involving
         $ v, $ check if
         $ \Theta \models \sigma. $ If yes, then return $ \bot. $ 
      \item
         For every $ ( c_i \Rightarrow d_i ) \in \overline{C} $ which
         has $ \alpha_i( \overline{C} ) $ true and which contains 
         $ v, $ let $ d' $ be the set of substlets in $ d_i $ that
         are consistent with $ \Theta. $ 
         If $ d' = \emptyset, $ then return $ \bot. $ 
         Otherwise, if $ \emptyset \subset d' \subset d, $ append
         $ ( c_i \Rightarrow d' ) $ to $ \overline{C}. $ 
      \end{enumerate} 
   \item[CLAUSES1]
      As long as $ k_1 < \| \overline{C} \| $ do the following:
      \begin{enumerate}
      \item
         If $ \alpha_{k_1}( \overline{C} ) $ is true, and
         the $ k_1 $-refinement $ ( c_{k_1} \Rightarrow d_{k_1} ) $ 
         contains a variable $ v,$
         s.t. all substlets $ (\overline{v}/\overline{x}) \in d_{k_1} $ 
         agree on the assignment to $ v, $ then let $ x $ be the agreed 
         value. Append $ v/x $ to $ \Theta. $ 
      \item
         Set $ k_1 := k_1 + 1. $  
      \end{enumerate}
      If $ s \leq \| \Theta \|, $ then restart at {\bf SUBST}. 
      (This means that $ \Theta $ was extended in the previous step.) 
   \item[CLAUSESN] 
      As long as there is an $ i $ with $ 2 \leq i \leq S+1, $ 
      s.t. $ k_i \leq \| \overline{C} \|, $ 
      pick the smallest such $ i. $ 
      If $ \alpha_{k_i}( \overline{C} ) $ holds, then 
      \begin{enumerate}
      \item
         Enumerate all circles $ ( \lambda_1, \ldots, \lambda_{i} ) $
         of size $ i $ starting at $ \lambda_1 = k_i. $ 
         For each such circle $ ( \lambda_1, \ldots, \lambda_{i} ), $ 
         let $ I = \{ d_{\lambda_1}, \ldots, d_{\lambda_{i}} \}. $ 

         Call $ {\bf refine}(I, \Theta, \overline{C} ). $ 
         If the result is $ \bot, $ then return $ \bot. $ 
         If after the call, we have 
         $ \| \overline{C} \| > k_1, $ then restart at
         {\bf CLAUSES1}.   
      \end{enumerate} 
   \end{description}

   \noindent
   Let $ \overline{C} $ be a refinement stack. Let
   $ k = \| \overline{C} \|. $ 
   Let $ I $ be a subset of $ \{ 1, \ldots, k \}, $ s.t.
   for every $ i \in I, \ \ \alpha_i( \overline{C} ) $ holds.
   Algorithm 
   $ {\bf refine}( I, \Theta, \overline{C} ) $ is defined as follows:  

   \begin{enumerate}
   \item
      Initialize a map $ U $ with domain $ I $ 
      by setting $ U(i) = \emptyset, $ for each
      $ i \in I. $ 
      Eventually, $ U $ will map 
      each $ i \in I $ to the set of substlets in $ d_i, $ that can occur 
      in a solution $ \Theta' $ of 
      $ \{ d_i \ | \ i \in I \} $ extending $ \Theta. $ 
   \item
      Enumerate all maps $ S $ with domain $ I $ 
      that map each $ i \in I $ to 
      a substlet $ S(i) $ in $ d_i, $ and that have the following properties: 
      No $ S(i) $ conflicts $ \Theta, $ 
      no $ S(i), S(i') $ are in conflict with each other. 
      $ \Theta \cup \{ S(i) \ | \ i \in I \} $ does not imply a blocking
      $ \sigma \in \Sigma^{-}. $  

      \noindent
      For each of the generated mappings $ S, $ for each $ i \in I, $ set
      $ U(i) = U(i) \cup \{ S(i) \}. $ 
   \item
      For every $ i \in I, $ for which $ U(i) \not = d_i, $ 
      add the refinement $ ( \ c_i \Rightarrow U(i) \ ) $ to $ \overline{C}. $ 
   \end{enumerate} 
\end{algo}

The local consistency checker gives priority to 
checking against the substitution. 
After checking for conflicts against the substitution, 
Algorithm~\ref{Algo_local_consistency_checking} generates
circles of size up to $ S+1, $ and checks for each of the substlets
occurring in the clauses of such a circle, whether it can occur
in a solution. Substlets that do not occur in a solution are
refined away.
Preference is given to small circles. This means that circles
of size $ i+1 $ will be checked only after all circles
up to size $ i $ have been checked. 

We will discuss (and disprove) the conjecture mentioned above, that 
it is sufficient to check circles, when preference is given
to smaller subsets. More precisely: 
If for a given subset $ C \subseteq \Sigma^{+}, $ all its
subcircles have been checked, then $ C $ needs to be checked
only if it is a circle by itself.
We formally define what 'has been checked' means:

\begin{defi}
   \label{Def_filtered} 
   Let $ ( \Sigma^{+}, \Sigma^{-} ) $ be a GCSP.
   Let $ \Theta $ be a substitution. 
   Let $ C $ be a subset of clauses of $ \Sigma^{+}. $ 
   We write $ \Phi(C) $ for the following property:
   For every clause $ c \in C, $ for every substlet $ s \in c, $ 
   there is a solution $ \Theta $ of $ ( C, \Sigma^{-} ), $
   s.t. $ \Theta \models s. $ 
\end{defi}

Algorithm~\ref{Algo_local_consistency_checking} tries to 
establish $ \Phi(C) $ for every
subset $ C $ of size $ i $ up to $ S+1. $ It assumes that 
when $ \Phi(C) $ holds for circles with size smaller than $ \| C \|, $ 
and $C$ is not a circle, then $ \Phi(C) $ automatically holds.
We have believed for some time that this assumption is true,
because Lemma~\ref{Lemma_disconnected} and 
Lemma~\ref{Lemma_pivot} provide evidence for it,
and it simplifies Algorithm~\ref{Algo_local_consistency_checking}.
Unfortunately, the property fails at $ S = 4, $
when circles have size $ 5. $ 

\begin{conj}
   \label{Conj_circles}
   Let $ ( \Sigma^{+}, \Sigma^{-} ) $ be a GCSP. 
   Assume that every 
   strict subset $ C' \subset \{ c_1, \ldots, c_{S+1} \} $ that
   can be arranged into a circle $ c'_1, \ldots, c'_{S'+1} $
   has property $ \Phi(C'). $ 
   Then if $ C $ cannot be arranged into a circle, $ C $ 
   has the property $ \Phi(C). $ 
\end{conj}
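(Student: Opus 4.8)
Since the surrounding text announces that this property fails at $S=4$, the plan is not to prove Conjecture~\ref{Conj_circles} but to refute it by exhibiting a GCSP on five clauses whose $\sim$-graph (Definition~\ref{Def_related}, Definition~\ref{Def_circle}) admits no cyclic arrangement of all five clauses, yet for which $\Phi$ (Definition~\ref{Def_filtered}) holds on every proper subcircle while failing on the whole set. The only reductions available to us are Lemma~\ref{Lemma_disconnected}, which handles a disconnected $C$, and Lemma~\ref{Lemma_pivot}, which handles a $C$ held together by a single pivot clause; these cover exactly the cases where the $\sim$-graph is disconnected or has a cut vertex. The natural place to break the conjecture is therefore a $2$-connected, non-Hamiltonian graph, and the smallest candidate on five vertices is the complete bipartite graph $K_{2,3}$. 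Being bipartite with parts of sizes $2$ and $3$, it has no odd cycle, hence no $5$-cycle, so the five-clause set cannot be arranged into a circle; its only cycles are three $4$-cycles, so the proper subcircles one must control are exactly these $4$-cycles (size-$3$ circles) together with the individual edges (size-$1$ circles).

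Concretely, I would take $\Sigma^{-} = \emptyset$ so that $\sim$ reduces to ``shares a variable,'' place the two hub clauses $A,B$ on the degree-$3$ side and the linking clauses $X,Y,Z$ on the other side, and wire the variables so that $A$ and $B$ share no variable, $X,Y,Z$ pairwise share no variable, but each of $A,B$ shares one distinct variable with each of $X,Y,Z$. With every variable ranging over $\{0,1\}$, I would let $A$ enforce even parity of its three variables, $B$ enforce odd parity of its three, and each of $X,Y,Z$ enforce equality of the two variables it couples. Globally these are contradictory: the equalities imposed by $X,Y,Z$ identify $A$'s triple with $B$'s triple componentwise, which cannot be simultaneously even and odd, so $(C,\Sigma^{-})$ has no solution at all and $\Phi(C)$ fails outright.

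The step that actually needs verification — and the main content of the disproof — is that every proper subcircle satisfies $\Phi$. In each of the three $4$-cycles exactly one linking clause is dropped, so its shared variable becomes free in both $A$ and $B$; this single degree of freedom is precisely enough to repair the parity in each hub independently, so for any choice of the two surviving equalities one completes to a solution, and every substlet of each of the four clauses in the cycle extends to such a solution. The edges are easier still: a single shared variable never forces a parity, so every substlet of either clause extends. Because $K_{2,3}$ is symmetric in $X,Y,Z$, checking one $4$-cycle and one edge suffices by symmetry. This configuration uses five clauses (the case $S=4$) exactly as the text predicts, and it shows that establishing $\Phi$ on all subcircles does not establish $\Phi$ for non-circular $C$, thereby refuting Conjecture~\ref{Conj_circles}.
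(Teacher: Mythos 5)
Your refutation is correct and takes essentially the same approach as the paper: your $K_{2,3}$ construction with an even-parity hub, an odd-parity hub, and three equality linkers is exactly the paper's Example~\ref{Exa_failure_conj_circles} (your $A,B,X,Y,Z$ are $c_1,c_5,c_2,c_3,c_4$), and your verification that the three $4$-cycles and the hub--linker edges satisfy $\Phi$ while the full five-clause set has no solution matches the paper's argument. The only addition is your explicit bipartiteness reasoning for why no $5$-circle exists, which the paper leaves as ``easily checked.''
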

We prove Conjecture~\ref{Conj_circles} for $ S < 4, $ and 
provide a counter example for $ S = 4. $ 
\begin{proof}
   \begin{itemize}
   \item
      $ S = 1 $ follows from Lemma~\ref{Lemma_disconnected}. 
   \item
      In order to prove $ S = 2, $ assume that
      $ c_1, c_2, c_3 $ are clauses that do not form a circle. 
      Without loss of generality, we may assume that
      $ c_1 \not \sim c_3. $ 
      If we also have $ c_1 \not \sim c_2, $ then 
      $ \{ c_1, c_2, c_3 \} $ can be partitioned into
      $ \{ c_1 \}, \{ c_2, c_3 \}, $ so that Lemma~\ref{Lemma_disconnected} 
      can be applied.
      If we have $ c_1 \sim c_2, $ we can apply Lemma~\ref{Lemma_pivot}
      with $ C_1 = \{ c_1 \}, \ c = c_2, \ C_2 = \{ c_3 \}. $ 
   \item
      We prove $ S = 3. $ We use the fact
      that Conjecture~\ref{Conj_circles} holds for $ S < 3. $ 
      Let $ c_1, c_2, c_3, c_4 \in \Sigma^{+}. $ 
      If $ \{ c_1, c_2, c_3, c_4 \} $ can be partitioned into
      two disjoints sets, we can apply Lemma~\ref{Lemma_disconnected},
      and we are done. 
      Otherwise, if $ \{ c_1, c_2, c_3, c_4 \} $ cannot be
      partitioned into disconnected sets, there are two possibilities: 
      \begin{itemize}
      \item 
         The clauses form a line 
         $ c_1 \sim c_2 \sim c_3 \sim c_4. $ 
         If $ c_1 \sim c_4, $ then $ ( c_1, c_2, c_3, c_4 ) $ 
         is a circle, so that Conjecture~\ref{Conj_circles} holds trivially. 

         Otherwise, we can still have $ c_1 \sim c_3 $ or 
         $ c_2 \sim c_4. $ If we have both, then
         $ ( c_1, c_3, c_4, c_2 ) $ is a circle,
         so that Conjecture~\ref{Conj_circles} again holds trivially.

         If $ c_1 \not \sim c_3, $ we can apply
         Lemma~\ref{Lemma_pivot} with 
         $ C_1 = \{ c_1 \}, \ c = c_2, $ and $ C_2 = \{ c_3, c_4 \}. $ 
        
         Similarly, if $ c_2 \not \sim c_4, $ we can apply
         Lemma~\ref{Lemma_pivot} with 
         $ C_1 = \{ c_1, c_2 \}, \ c = c_3, $ and $ C_2 = \{ c_4 \}. $ 
      \item
         The clauses form a kind of star with $ c_1 $ in the center:
         $ c_1 \sim c_2, \ c_1 \sim c_3, \ c_1 \sim c_4. $ 

         For $ c_2, $ if $ c_2 \sim c_3, $ nor $ c_2 \sim c_4, $ 
         we can apply Lemma~\ref{Lemma_pivot} with 
         $ C_1 = \{ c_2 \}, \ c = c_1, \ C_2 = \{ c_3, c_4 \}. $ 

         If we have both of $ c_2 \sim c_3 $ and $ c_2 \sim c_4, $ then
         $ ( c_2,c_3,c_1,c_4 ) $ is a circle.  

         In the remaining case, we may assume without loss of generality
         that $ c_2 \sim c_3, $ but also $ c_2 \not \sim c_4. $ 

         This means that we have $ c_2 \sim c_3, c_2 \not \sim c_4. $ 
         If $ c_4 \sim c_3, $ then $ ( c_1,c_2,c_3,c_4) $ is a circle.
         If $ c_4 \not \sim c_3, $ then we can apply Lemma~\ref{Lemma_pivot} 
         with $ C_1 = \{ c_2,c_3 \}, \ c = c_1, \ C_2 = \{ c_4 \}. $ 
      \end{itemize}
   \end{itemize}
\end{proof}
We give a counter example for $ S = 4. $ 
\begin{exa}
   \label{Exa_failure_conj_circles}
   Consider the following GCSP, which has no blockings, 
   and the following clauses: 
   \[ \begin{array}{ll}
       (c_1) & (X_1,X_2,X_3) \ / \ (0,0,0) \ | \ (0,1,1) \ | \ 
                                   (1,1,0) \ | \ (1,0,1) \\
       (c_2) & (X_1,Y_1) \ / \ (0,0) \ | \ (1,1) \\
       (c_3) & (X_2,Y_2) \ / \ (0,0) \ | \ (1,1) \\
       (c_4) & (X_3,Y_3) \ / \ (0,0) \ | \ (1,1) \\
       (c_5) & (Y_1,Y_2,Y_3) \ / \ (1,0,0) \ | \ (0,1,0) \ | \ 
                                   (0,0,1) \ | \ (1,1,1) \\
      \end{array}
   \]
   We have $ c_1 \sim c_2, \ c_1 \sim c_3, \ c_1 \sim c_4, $ and
   $ c_2 \sim c_5, \ c_3 \sim c_5, \ c_4 \sim c_5. $ There are
   no other connections.
   The example can be understood as follows: Clause $ c_2 $ 
   requires that $ X_1 \Theta = Y_1 \Theta. $ 
   Similarly, $ c_3 $ requires that $ X_2 \Theta = Y_2 \Theta, $ 
   and $ c_4 $ requires that $ X_3 \Theta = Y_3 \Theta. $ 
   Clause $ c_1 $ requires that $ X_1 \Theta + X_2 \Theta + X_3 \Theta $
   is even, while $ c_5 $ requires that 
   $ Y_1 \Theta + Y_2 \Theta + Y_3 \Theta $ 
   is odd. Since the sums must be equal, and cannot be odd and even at 
   the same time, $ ( \{ c_1,c_2,c_3,c_4,c_5 \}, \{ \ \} ) $ has no 
   solution.
   
   Ignoring direction and starting point, there are three
   circles of size $ 4: $  
   \[ ( c_1,c_2,c_5,c_3 ), \ \
      ( c_1,c_2,c_5,c_4 ), \ \
      ( c_1,c_3,c_5,c_4 ). \]
   Since the circles are symmetric, we show that
   every substlet occurring in $ \{ c_1,c_2,c_5,c_3 \} $ can occur
   in a solution. One can pick the instance of $ c_1 $ and
   $ c_5 $ in such a way that they agree on $ X_1/Y_1 $ and
   $ X_2/Y_2. $ They will disagree on $ X_3/Z_3, $ but because
   $ c_4 $ is not considered, this is no problem.
   After that, the instances to $ c_2 $ and $ c_3 $
   are fixed. 
   It is easily checked that $ c_1,c_2,c_3,c_4,c_5 $ cannot be arranged
   into a circle. 
\end{exa}
Example~\ref{Exa_failure_conj_circles} contains a GCSP that would 
not be refined by Algorithm~\ref{Algo_local_consistency_checking} 
with $ S = 4, $ despite the fact that it has no solution. 
We will refrain from trying to make 
Algorithm~\ref{Algo_local_consistency_checking} complete, because 
we believe that it is not worth the effort. 
Experiments suggest that using 
Algorithm~\ref{Algo_local_consistency_checking} becomes
too costly already at $ S \geq 3. $ Implementing a more elaborate check
at $ S \geq 3, $ would 
make Algorithm~\ref{Algo_local_consistency_checking} even more costly, 
and harder to implement, without much hope for improvement.

It is important to observe that even when 
Algorithm~\ref{Algo_local_consistency_checking} is used
as a precheck, it still needs to be restorable, because 
it may be called by Algorithm~\ref{Alg_find_optimal_match},
which will turn on and off different substlets, based on $ \alpha. $
Only the first call need not be restorable. 

\section{Conclusions}

The problem of matching a geometric formula into an interpretation 
used to be the bottleneck of our implementation of geometric 
resolution. In order to 
improve this situation, we gave a translation of the matching problem into 
GCSP, and provided efficient approaches for solving GCSP. One 
approach is to solve the GCSP directly by a combination of refinement,
backtracking and learning. The other approach is to translate the problem
into SAT. Our experiments suggest that both
approaches have comparable performance. Both approaches still
have room for improvement. For translation to SAT, one could  
develop a dedicated SAT solver. For the direct approach using GCSP,
one could add heuristics that control when learnt lemmas are forgotten,
and probably it will be possible to add deep backtracking.

Independent of the relative performance of the two approaches, 
one can conclude that the speed of 
{\bf Geo} can be improved by a very large factor, and
that matching is no longer the bottleneck that hinders further
development of the geometric resolution calculus. 

Since GCSP may have applications outside of geometric logic,
we defined an input format for GCSP, similar to 
DIMACS format for SAT, that can be used for independent applications. 
We also made the sources
of our matching algorithm available. 

The fact that the clause refining algorithm based on
local consistency checking turned out not 
competitive, shows that search algorithms that appear good in theory, 
are not necessarily good in 
practice. In general, it is difficult to predict what will
be the effect of a modification of a search algorithm. 
A seemingly small change may have a large impact on performance.

As for geometric resolution, one might argue that a calculus that 
uses an NP-complete problem
as its basic operation is not viable, but there is room for interpretation: 
The complexity of the matching problem is caused by
the fact that as result of flattening terms, geometric formulas and 
interpretations 
have DAG-structure instead of tree-structure. This increased expressiveness 
means that a geometric formula possibly represents exponentially many
formulas with tree-structure. This may 
very well result in shorter proofs. Only experiments can determine 
which of the two effects will be stronger.

\section{Acknowledgements}

\noindent
We gratefully acknowledge that 
this work was supported by the Polish National Science Center
(Narodowe Centrum Nauki) under grant number DEC-2015/17/B/ST6/01898
(Application of Logic with Partial Functions). A large
part of this work was carried out while the author was employed
at Wroc{\l }aw University, Poland.

\bibliographystyle{plain}
\bibliography{subsumption}

\begin{thebibliography}{10}

\bibitem{DimacsSAT1993}
Unknown Author.
\newblock Satisfiability suggested format.
\newblock \url{http://www.cs.ubc.ca/~hoos/SATLIB/Benchmarks/SAT/satformat.ps}.

\bibitem{BezemCoquand2005}
Marc Bezem and Thierry Coquand.
\newblock Automating coherent logic.
\newblock In Geoff Sutcliffe and Andrei Voronkov, editors, {\em LPAR}, volume
  3835 of {\em LNCS}, pages 246--260. Springer Verlag, 2005.

\bibitem{deNivelle2011a}
Hans de~Nivelle.
\newblock Classical logic with partial functions.
\newblock {\em Journal of Automated Reasoning}, 47(4):399--425, 2011.

\bibitem{deNivelle2014a}
Hans de~Nivelle.
\newblock Theorem proving for logic with partial functions by reduction to
  {Kleene} logic.
\newblock In Christoph Benzm\"{u}ller and Jens Otten, editors, {\em Automated
  Reasoning in Quantified Non-Classical Logics ({ARQNL})~2014}, pages 71--85.
  {VSL} Workshop Proceedings, 2014.

\bibitem{deNivelle2016a}
Hans de~Nivelle.
\newblock Subsumption algorithms for three-valued geometric resolution.
\newblock In Nicola Olivetti and Ashish Tiwari, editors, {\em International
  Joint Conference on Automated Reasoning (IJCAR) 2016}, volume 9706 of {\em
  LNCS}, pages 257--272. Springer, 2016.

\bibitem{deNivelle2014b}
Hans de~Nivelle.
\newblock Theorem proving for classical logic with partial functions by
  reduction to {Kleene} logic.
\newblock {\em Journal of Logic and Computation}, 27(2):509--548, 2017.
\newblock ({Accepted April~2014}).

\bibitem{deNivelle2018a}
Hans de~Nivelle.
\newblock solver for {GCSP}.
\newblock can be obtained from \url{https://cs-sst.github.io/faculty/nivelle}
  under `Implementation', 2018.

\bibitem{deNivelleMeng2006a}
Hans de~Nivelle and Jia Meng.
\newblock Geometric resolution: A proof procedure based on finite model search.
\newblock In John Harrison, Ulrich Furbach, and Natarajan Shankar, editors,
  {\em International Joint Conference on Automated Reasoning 2006}, volume 4130
  of {\em Lecture Notes in Artificial Intelligence}, pages 303--317, Seattle,
  USA, August 2006. Springer Verlag.

\bibitem{deNivelleMeng2007f}
Hans de~Nivelle and Jia Meng.
\newblock theorem prover {Geo 2007f}.
\newblock can be obtained from \url{http://www.ii.uni.wroc.pl/~nivelle/},
  September 2007.

\bibitem{Dechter2003}
Rina Dechter.
\newblock {\em Constraint Processing}.
\newblock Morgan Kaufmann Publishers, 2003.

\bibitem{Minisat2004}
Niklas E\'{e}n and Niklas S\"{o}rensson.
\newblock An extensible {SAT}-solver.
\newblock In Enrico Giunchiglia and Armando Tacchella, editors, {\em Theory and
  Applications of Satisfiability Testing}, volume 2919 of {\em LNCS}, pages
  502--518. Springer, 2004.

\bibitem{GareyJohnson79}
Michael~R. Garey and David~S. Johnson.
\newblock {\em Computers and Intractibility: A Guide to the Theory of
  NP-Completeness}.
\newblock W.H. Freeman and Company, 1979.

\bibitem{GottlobLeitsch85}
Georg Gottlob and Alexander Leitsch.
\newblock On the efficiency of subsumption algorithms.
\newblock {\em Journal of the ACM}, 32(2):280--295, 1985.

\bibitem{MaloSebag2004}
J\'{e}r\^{o}me Maloberti and Mich\`{e}le Sebag.
\newblock Fast theta-subsumption with constraint satisfaction algorithms.
\newblock {\em Machine Learning}, 55:137--174, 2004.

\bibitem{Handbook_sat:CDCL}
Joao Marques-Silva, Ines Lynce, and Sharad Malik.
\newblock Conflict-driven clause learning {SAT} solvers.
\newblock In Armin Biere, Marijn Heule, Hans van Maaren, and Toby Walsh,
  editors, {\em Handbook of Satisfiability}, chapter~4, pages 131--153. IOS
  Press, 2009.

\bibitem{MurrayRosenthal1993}
Neil Murray and Erik Rosenthal.
\newblock Signed formulas: A liftable meta-logic for multiple-valued logics.
\newblock In Jan Komorowski and Zbigniew Ra\'{s}, editors, {\em Proceedings of
  the 7th International Symposium on Methodologies for Intelligent Systems
  (ISMIS)}, volume LNCS~689, pages 275--284, 1993.

\bibitem{SchefHerbWys94}
Tobias Scheffer, Ralf Herbrich, and Fritz Wysotzki.
\newblock Efficient theta-subsumption based on graph algorithms.
\newblock In Stephen Muggleton, editor, {\em Inductive Logic Programming, 6th
  International Workshop, Selected Papers}, LNAI, pages 212--228. Springer
  Verlag Berlin, 1996.

\bibitem{CASC2016}
Geoff Sutcliffe.
\newblock The {CADE} {ATP} system competition.
\newblock \verb+http://www.cs.miami.edu/~tptp/CASC/J8/+, June 2016.

\end{thebibliography}

\end{document}